\title{Search-Space Reduction Via Essential Vertices Revisited: Vertex Multicut and Cograph Deletion}
\titlerunning{Search-Space Reduction Via Essential Vertices Revisited}
\author{Bart M.\,P. Jansen}{Eindhoven University of Technology, The Netherlands}{b.m.p.jansen@tue.nl}{https://orcid.org/0000-0001-8204-1268}{\flag[3cm]{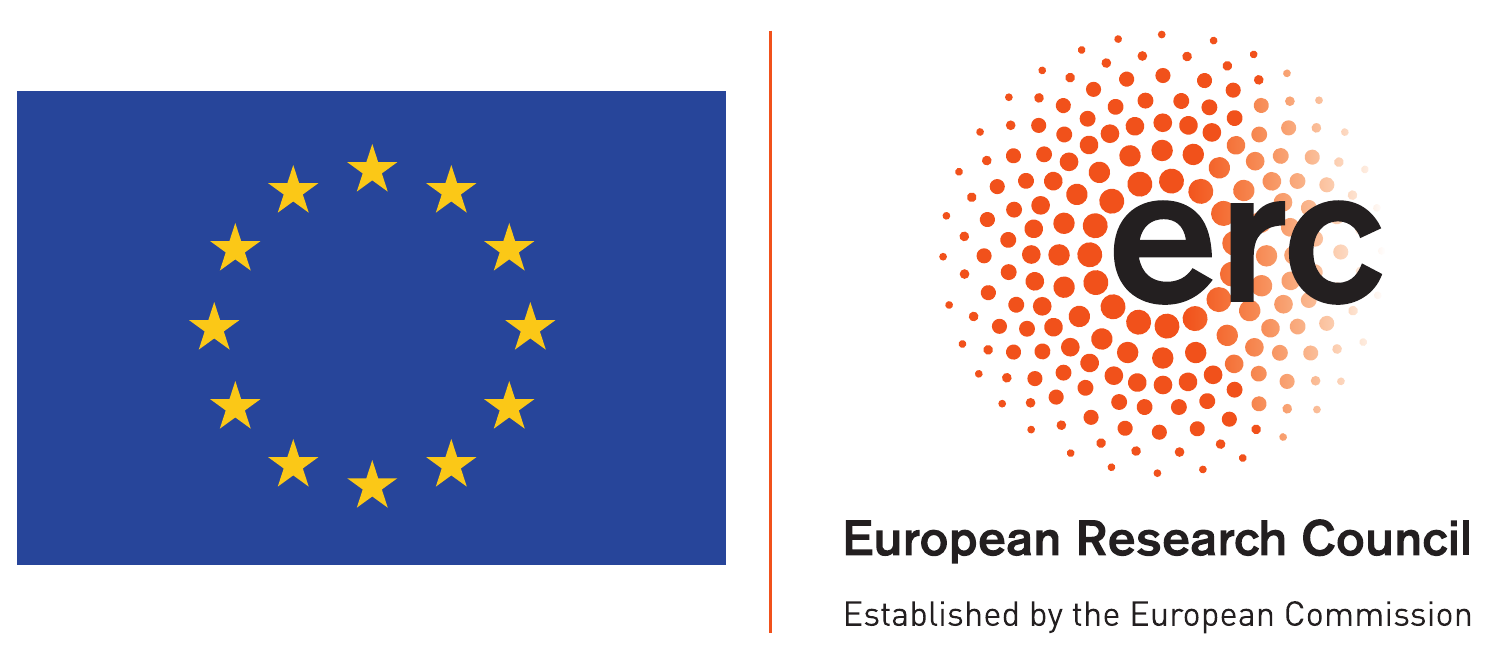}Supported by the European Research Council (ERC) under the European Union's Horizon 2020 research and innovation programme (grant agreement No 803421, ReduceSearch).} 
\author{Ruben F.\,A. Verhaegh}{Eindhoven University of Technology, The Netherlands}{r.f.a.verhaegh@tue.nl}{https://orcid.org/0009-0008-8568-104X}{}
\authorrunning{B.M.P.~Jansen and R.F.A.~Verhaegh}
\keywords{fixed-parameter tractability, essential vertices, integrality gap} 
\newcommand{\deftask}[3]{
	\vspace{1mm}
	\noindent\fbox{
		\begin{minipage}{0.96\textwidth}
			\begin{tabular*}{\textwidth}{@{\extracolsep{\fill}}lr} \textsc{#1} &  \\ \end{tabular*}
			{\bf{Input:}} #2 \\
			{\bf{Task:}} #3
		\end{minipage}
	}
	\vspace{1mm}
}
\newcommand{\problem}[3]{
  \vspace{1mm}
  \noindent\fbox{
    \begin{minipage}{0.96\textwidth}
      \begin{tabularx}{\textwidth}{@{\hspace{\parindent}}l X}
        \multicolumn{2}{@{\hspace{\parindent}}l}{\textsc{#1}} \\
        \textbf{Input:} & #2 \\
        \textbf{Task:} & #3 \\
      \end{tabularx}
    \end{minipage}
  }
  \vspace{1mm}
}
\newcommand{\optimization}[4]{
  \vspace{1mm}
  \noindent\fbox{
    \begin{minipage}{0.96\textwidth}
      \begin{tabularx}{\textwidth}{@{\hspace{\parindent}}l X}
        \multicolumn{2}{@{\hspace{\parindent}}l}{\textsc{#1}} \\
        \textbf{Input:} & #2 \\
        \textbf{Feasible solution:} & #3 \\
        \textbf{Objective:} & #4 \\
      \end{tabularx}
    \end{minipage}
  }
  \vspace{1mm}
}
\newcommand{\C}{\mathcal{C}}
\newcommand{\R}{\mathbb{R}}
\newcommand{\Oh}{\mathcal{O}}
\renewcommand{\P}{\mathbb{P}}
\newcommand{\eps}{\varepsilon}
\newcommand{\psdetectshort}[1]{\textsc{$c$-Essential detection}}
\newcommand{\psdetect}[1]{\textsc{$#1$-Essential detection for $\Pi$}}
\newcommand{\psdetectFor}[2]{\textsc{$#1$-Essential detection for #2}}
\begin{document}

\maketitle

\begin{abstract}
For an optimization problem~$\Pi$ on graphs whose solutions are vertex sets, a vertex~$v$ is called \emph{$c$-essential} for~$\Pi$ if all solutions of size at most~$c \cdot \textsc{opt}$ contain~$v$. Recent work showed that polynomial-time algorithms to detect $c$-essential vertices can be used to reduce the search space of fixed-parameter tractable algorithms solving such problems parameterized by the size~$k$ of the solution. We provide several new upper- and lower bounds for detecting essential vertices. For example, we give a polynomial-time algorithm for \textsc{$3$-Essential detection for Vertex Multicut}, which translates into an algorithm that finds a minimum multicut of an undirected $n$-vertex graph~$G$ in time~$2^{\Oh(\ell^3)} \cdot n^{\Oh(1)}$, where~$\ell$ is the number of vertices in an optimal solution that are \emph{not} $3$-essential. Our positive results are obtained by analyzing the integrality gaps of certain linear programs. Our lower bounds show that for sufficiently small values of~$c$, the detection task becomes NP-hard assuming the \emph{Unique Games Conjecture}. For example, we show that \textsc{($2-\eps$)-Essential detection for Directed Feedback Vertex Set} is NP-hard under this conjecture, thereby proving that the existing algorithm that detects $2$-essential vertices is best-possible.
\end{abstract}

\section{Introduction}
\label{sec:introduction}

Preprocessing is an important tool for dealing with NP-hard problems. The idea is that before starting a time-consuming computation on an input, one first exhaustively applies simple transformation steps that provably do not affect the desired output, but which make the subsequently applied solver more efficient. Preprocessing is often highly effective in practice~\cite{AchterbergBGRW16,Weihe98}.

There have been several attempts to theoretically explain the speed-ups obtained by preprocessing. The concept of kernelization~\cite{Fellows06,FominLSM19}, phrased in the language of parameterized complexity theory~\cite{CyganFKLMPPS15,DowneyF12}, is one such attempt. Recently, Bumpus, Jansen, and de Kroon~\cite{bumpus-essential} proposed an alternative framework for developing and analyzing polynomial-time preprocessing algorithms that reduce the search space of subsequently applied algorithms for NP-hard graph problems. They presented the first positive and negative results in this framework, which revolves around the notion of so-called \emph{$c$-essential vertices}. In this paper, we revisit this notion by providing new preprocessing results and new hardness proofs.

To be able to discuss our results, we first introduce and motivate the concept of $c$-essential vertices and the corresponding algorithmic preprocessing task. Our results apply to optimization problems on graphs in which the goal is to find a minimum-size vertex set that hits all obstacles of a certain kind. The \textsc{(Undirected) Vertex Multicut} problem is a prime example. Given an undirected graph~$G$, annotated by a collection~$\mathcal{T}$ consisting of pairs of terminal vertices, the goal is to find a minimum-size vertex set whose removal disconnects all terminal pairs. The decision version of this problem is NP-complete, but \emph{fixed-parameter tractable} parameterized by the size of the solution: there is an algorithm by Marx and Razgon~\cite{MarxR14} that, given an $n$-vertex instance together with an integer~$k$, runs in time~$2^{\Oh(k^3)} \cdot n^{\Oh(1)}$ and outputs a solution of size at most~$k$, if one exists. The running time therefore scales exponentially in the size of the solution, but polynomially in the size of the graph. This yields a great potential for preprocessing: if an efficient preprocessing phase manages to identify some vertices~$S \subseteq V(G)$ that are guaranteed to be part of an optimal solution, then finding a solution of size~$k$ in~$G$ reduces to finding a solution of size~$k - |S|$ in~$G - S$, thereby reducing the running time of the applied algorithm and its search space. To be able to give guarantees on the amount of search-space reduction achieved, the question becomes: under which conditions can a polynomial-time preprocessing algorithm identify vertices that belong to an optimal solution?

\subparagraph*{Essential vertices} The approach that Bumpus et al.~\cite{bumpus-essential} take when answering this question originates from the idea that it may be feasible to detect vertices as belonging to an optimal solution when they are \emph{essential} for making an optimal solution. This is formalized as follows. For a real number~$c \geq 1$ and fixed optimization problem~$\Pi$ on graphs whose solutions are vertex subsets, a vertex~$v$ of an input instance~$G$ is called \emph{$c$-essential} if vertex~$v$ is contained in all $c$-approximate solutions to the instance. Hence a $c$-essential vertex is not only contained in all optimal solutions, but even in all solutions whose size is at most~$c \cdot \textsc{opt}$. To obtain efficient preprocessing algorithms with performance guarantees, the goal then becomes to develop polynomial-time algorithms to detect $c$-essential vertices in the input graph, when they are present.

For some problems like \textsc{Vertex Cover} (in which the goal is to find a minimum-size vertex set that intersects each edge), it is indeed possible to give a polynomial-time algorithm that, given a graph~$G$, outputs a set~$S$ of vertices that is part of an optimal vertex cover and contains all 2-essential vertices. For optimization problems whose structure is more intricate, like \textsc{Odd Cycle Transversal}, finding $c$-essential vertices from scratch still seems like a difficult task. Bumpus et al.~\cite{bumpus-essential} therefore formulated a slightly easier algorithmic task related to detecting essential vertices and proved that solving this simpler  task is sufficient to be able to achieve search-space reduction. For a vertex hitting set problem~$\Pi$ whose input is a (potentially annotated) graph~$G$ and whose solutions are vertex sets hitting all (implicitly defined) constraints, we denote by~$\textsc{opt}_\Pi(G)$ the cardinality of an optimal solution to~$G$. The detection task is formally defined as follows, for each real~$c \geq 1$.

\deftask
{\psdetect{c}}
{A (potentially annotated) graph $G$ and integer~$k$.}
{Find a vertex set~$S \subseteq V(G)$ such that:
    \begin{enumerate}[G1]
        \item if~$\textsc{opt}_\Pi(G) \leq k$, then there is an optimal solution in~$G$ containing all of~$S$, and \label{prop:c-essential-sup-det-double-guarantee-1} \label{g1}
        \item if~$\textsc{opt}_\Pi(G) = k$, then~$S$ contains all $c$-essential vertices.\label{prop:c-essential-sup-det-double-guarantee-2} \label{g2}
    \end{enumerate}}

The definition above simplifies the detection task by supplying an integer~$k$ in addition to the input graph, while only requiring the algorithm to work correctly for certain ranges of~$k$. The intuition is as follows: when~$k$ is correctly guessed as the size of an optimal solution, the preprocessing algorithm should find all $c$-essential vertices, and is allowed to find additional vertices as long as they are part of an optimal solution.  Bumpus et al.~\cite{bumpus-essential} give a \emph{dove-tailing}-like scheme  that manages to use algorithms for \psdetect{c} to give improved fixed-parameter tractable running times for solving~$\Pi$ from scratch. The exponential dependence of the running time of the resulting algorithm is not on the \emph{total} size of the solution, but only on the number of vertices in the solution that are \emph{not} $c$-essential. Hence their results show that large optimal solutions can be found efficiently, as long as they are composed primarily out of $c$-essential vertices. For example, they prove that a minimum vertex set intersecting all odd cycles (a solution to \textsc{Odd Cycle Transversal}) can be computed in time $2.3146^\ell \cdot n^{\Oh(1)}$, where~$\ell$ is the number of vertices in an optimal solution that are not $2$-essential and which are therefore avoided by at least one 2-approximation. Apart from polynomial-time algorithms for \psdetect{c} for various combinations of~$\Pi$ and~$c$, they also prove several \emph{lower bounds}. One of their main lower bounds concerns the \textsc{Perfect Deletion} problem, whose goal is to obtain a perfect graph by vertex deletions~\cite{HeggernesHJKV13}. They rule out the existence of a polynomial-time algorithm for \psdetectFor{c}{Perfect Deletion} for any~$c \geq 1$, assuming FPT~$\neq$~W[1]. (They even rule out detection algorithms running in time~$f(k) \cdot n^{\Oh(1)}$ for some function~$f$.)

We continue exploring the framework of search-space reduction by detecting essential vertices, from two directions. We provide both upper bounds (new algorithms for \psdetect{c}) as well as lower bounds. We start by discussing the upper bounds.

\subparagraph*{Our results: Upper bounds} The \textsc{Vertex Multicut} problem is the subject of our first results. The problem played a pivotal role in the development of the toolkit of parameterized algorithms for graph separation problems and stood as a famous open problem for years, until being independently resolved by two teams of researchers~\cite{BousquetDT18,MarxR14}. The problem is not only difficult to solve exactly, but also to approximate: Chawla et al.~\cite{ChawlaKKRS06} proved that, assuming Khot's~\cite{ugc-intro} \emph{Unique Games Conjecture} (UGC), it is NP-hard to approximate the edge-deletion version of the problem within any constant factor. A simple transformation shows that the same holds for the vertex-deletion problem. 

Our first result (\cref{thm:undirected-multicut-essential-detection}) is a polynomial-time algorithm for \textsc{$3$-Essential detection for Vertex Multicut}, which is obtained by analyzing the integrality gap of a restricted type of linear program associated with the problem. Using known results, this preprocessing algorithm translates directly into search-space reduction for the current-best FPT algorithms for solving \textsc{Vertex Multicut}. This results in an algorithm (\cref{cor:multicut:searchspace}) that computes an optimal vertex multicut in an $n$-vertex graph in time~$2^{\Oh(\ell^3)} \cdot n^{\Oh(1)}$, where~$\ell$ is the number of vertices in an optimal solution that are \emph{not} 3-essential.

Our approach for essential detection also applies for the variation of \textsc{Vertex Multicut} on \emph{directed} graphs. Since the directed setting is more difficult to deal with, vertices have to be slightly more essential to be able to detect them, resulting in a polynomial-time algorithm for \textsc{$5$-Essential detection for Directed Vertex Multicut} (\cref{thm:dvm-detection-positive}). This detection algorithm does not directly translate into running-time guarantees for FPT algorithms, though, as \textsc{Directed Vertex Multicut} is W[1]-hard parameterized by the size of the solution~\cite{PilipczukW18a}. (When the solution is forbidden from deleting terminals, the directed problem is already $W[1]$-hard with four terminal pairs, although the case of three terminal pairs is FPT~\cite{HatzelJLMPSS23}.)

Our second positive result concerns the \textsc{Cograph (Vertex) Deletion} problem. Given an undirected graph~$G$, it asks to find a minimum-size vertex set~$S$ such that~$G-S$ is a cograph, i.e., the graph~$G - S$ does not contain the 4-vertex path~$P_4$ as an induced subgraph. The problem is motivated by the fact that efficient algorithms for solving optimization problems on cographs can often be extended to work on graphs which are \emph{close} to being cographs, as long as a deletion set is known~\cite[\S 6]{Cai03a}. The decision version of \textsc{Cograph Deletion} is NP-complete due to the generic results of Lewis and Yannakakis~\cite{c-deletion}. Parameterized by the size~$k$ of the desired solution, \textsc{Cograph Deletion} is fixed-parameter tractable via the method of bounded-depth search trees~\cite{Cai96}: branching on vertices of a~$P_4$ results in a running time of~$4^k \cdot n^{\Oh(1)}$. Nastos and Gao~\cite{NastosG12} proposed a refined branching strategy by exploiting the structure of \emph{$P_4$-sparse graphs}, improving the running time to $3.115^k \cdot n^{\Oh(1)}$, following earlier improvements via the interpretation of \textsc{Cograph Deletion} as a \textsc{$4$-Hitting Set} problem~\cite{Fernau10,GrammGHN04,NiedermeierR03}. The latter viewpoint also gives a simple polynomial-time $4$-approximation. Whether a $(4-\eps)$-approximation can be computed in polynomial time is unknown; Drescher poses this \cite[\S8 Question 5]{Drescher} as an open problem for \emph{vertex-weighted} graphs. 

Our second result (\cref{lem:cograph-deletion-integrality-gap}) is a polynomial-time algorithm for \textsc{$3.5$-\textsc{Essential detection for Cograph Deletion}}. It directly translates into an FPT algorithm (\cref{cor:cograph:deletion:searchspace}) that, given a graph~$G$, outputs a minimum set~$S$ for which~$G-S$ is a cograph in time~$3.115^\ell \cdot n^{\Oh(1)}$; here~$\ell$ is the number of vertices in an optimal solution that are \emph{not} $3.5$-essential. Similarly as for \textsc{Vertex Multicut}, our detection algorithm arises from a new bound of $2.5$ on the integrality gap of a restricted version of a natural linear-programming relaxation associated to the deletion problem.

The fact that our algorithm detects $3.5$-essential vertices is noteworthy. It is known~\cite[\S 8]{bumpus-essential} that for any~$c \geq 1$, an algorithm for \psdetect{c} follows from an algorithm that computes a factor-$c$ approximation for the problem of finding a minimum-size solution avoiding a given vertex~$v$. In this setting, a $4$-approximation algorithm for \textsc{Cograph Deletion} easily follows since the problem is a special case of \textsc{$d$-Hitting Set}. We consider it interesting that we can obtain a detection algorithm whose detection constant~$c = 3.5$ is strictly better than the best-known approximation ratio~$4$ for the problem. 

Since our positive results all arise from bounding the integrality gap of certain restricted LP-formulations, we also study the integrality gap of a standard \textsc{Cograph Deletion}~LP and prove it to be~$4$ (\cref{thm:standard:cograph:gap}) using the probabilistic method. This provides a sharp contrast to the gap of~$2.5$ in our restricted setting.

\subparagraph*{Our results: Lower bounds} 

Our second set of results concerns lower bounds, showing that for certain combinations of $\Pi$ and~$c$ there are no efficient algorithms for \psdetect{c} under common complexity-theoretic hypotheses. In their work, Bumpus et al.~\cite{bumpus-essential} identified several problems~$\Pi$ such as \textsc{Perfect Deletion} for which the detection problem is intractable for \emph{all} choices of~$c$. Their proofs are based on the hardness of FPT-approximation for \textsc{Dominating Set}~\cite{KarthikLM19}. The setting for our lower bounds is different. We analyze problems for which the detection task is polynomial-time solvable for \emph{some} essentiality threshold~$c$, and investigate whether polynomial-time algorithms can exist for a smaller threshold~$c' < c$.

Our most prominent lower bound concerns the \textsc{Directed Feedback Vertex Set} problem (DFVS), which has attracted a lot of attention from the parameterized complexity community~\cite{ChenLLOR08,LokshtanovMRSZ21}. It asks for a minimum vertex set~$S$ of a \emph{directed} graph~$G$ for which~$G-S$ is acyclic. Svensson proved that under the UGC~\cite{ugc-dfvs-gap}, the problem is NP-hard to approximate to within any constant factor. Nevertheless, a polynomial-time algorithm for \psdetectFor{2}{DFVS} was given by Bumpus et al.~\cite[Lemma 3.3]{bumpus-essential}. We prove (\cref{thm:dfvs-detection-hard}) that the detection threshold~$2$ achieved by their algorithm is likely optimal: assuming the UGC, the detection problem for~$c' = 2 - \eps$ is NP-hard for any~$\eps \in (0, 1]$. To prove this, we show that an algorithm with~$c' = (2-\eps)$ would be able to distinguish instances with small solutions from instances with large solutions, while the hardness of approximation result cited above~\cite{ugc-dfvs-gap} show this task to be NP-hard under the UGC.

Apart from \textsc{Directed Feedback Vertex Set}, we provide two further lower bounds. For the \textsc{Vertex Cover} (VC) problem, an algorithm to detect $2$-essential vertices is known~\cite{bumpus-essential}. Assuming the UGC, we prove (\cref{thm:vc-detection-hard}) that \psdetectFor{(1.5 - \eps)}{VC} is NP-hard for all~$\eps \in (0, 0.5]$. A simple transformation then shows \textsc{$(1.5-\eps)$-Detection for Vertex Multicut} is also NP-hard under the UGC. These bounds leave a gap with respect to the thresholds of the current-best detection algorithms (2 and 3, respectively). We leave it to future work to close the gap.

\subparagraph*{Organization} The remainder of the paper is organized as follows. In \cref{sec:preliminaries} we give preliminaries on graphs and linear programming. \cref{sec:essential-hitting-set} introduces our formalization for hitting set problems on graphs and provides the connection between integrality gaps and detection algorithms. \cref{sec:positive-results} contains our positive results, followed by the negative results in \cref{sec:hardness-results}. We conclude with some open problems in \cref{sec:conclusion}.


    
\section{Preliminaries}
\label{sec:preliminaries}
We consider finite simple graphs, some of which are directed. Directed graphs or objects defined on directed graphs will always be explicitly indicated as such. We use standard notation for graphs and parameterized algorithms. We re-iterate the most relevant terminology and notation, but anything not defined here may be found in the textbook by Cygan et~al.~\cite{CyganFKLMPPS15} or in the previous work on essential vertices~\cite{bumpus-essential}.

\subparagraph*{Graph notation} We let $P_\ell$ denote the path graph on $\ell$ vertices. The weight of a path in a vertex-weighted graph is the sum of the weights of the vertices on that path, including the endpoints. Given two disjoint vertex sets $S_1$ and $S_2$ in a \textit{(directed)} graph $G$, we call a third vertex set $X \subseteq V(G)$ a \textit{(directed)} $(S_1, S_2)$-separator in $G$ if it intersects every \textit{(directed)} $(S_1, S_2)$-path in $G$. Note that $X$ may intersect $S_1$ and $S_2$. If $S_1$ or $S_2$ is a singleton set, we may write the single element of the set instead to obtain a $(v, S_2)$-separator for example. Menger's theorem relates the maximum number of pairwise vertex-disjoint paths between two (sets of) vertices to the minimum size of a separator between those two (sets of) vertices. We consider the following formulation of the theorem:
\begin{theorem}[{\cite[Corollary 9.1a]{schrijver-combopt}}] \label{thm:menger}
    Let $G$ be a directed graph and let $s,t \in V(G)$ be non-adjacent. Then the maximum number of internally vertex-disjoint directed $(s,t)$-paths is equal to the minimum size of a directed $(s, t)$-separator that does not include $s$ or $t$.
\end{theorem}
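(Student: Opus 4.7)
The plan is to prove Menger's theorem by reducing the vertex-disjoint paths problem in $G$ to the edge-version of max-flow / min-cut in an auxiliary digraph obtained by vertex splitting. The two inequalities will then follow from a clean correspondence between flows/cuts in the auxiliary graph and paths/separators in the original one.

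For the easy direction, maximum $\leq$ minimum, I would argue directly: if $\mathcal{P}$ is any collection of pairwise internally vertex-disjoint directed $(s,t)$-paths and $X$ is any $(s,t)$-separator with $s,t \notin X$, then every path in $\mathcal{P}$ must contain at least one \emph{internal} vertex in $X$ (since $s,t \notin X$), and by internal disjointness these vertices are pairwise distinct across the paths in $\mathcal{P}$, so $|X| \geq |\mathcal{P}|$.

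For the hard direction I would build an auxiliary digraph $G'$ with integer arc capacities as follows. Split each vertex $v \in V(G) \setminus \{s,t\}$ into two copies $v^-, v^+$ linked by an arc $(v^-, v^+)$ of capacity~$1$, and replace each original arc $(u,v)$ of $G$ by an arc of capacity $|V(G)|$ from $u^+$ to $v^-$, where we identify $s^+ := s$ and $t^- := t$. Then apply the edge-form of max-flow / min-cut to $G'$ with source $s$ and sink $t$. Any minimum $(s,t)$-cut in $G'$ has capacity at most $|V(G)|$ (cutting all splitting-arcs suffices), so it cannot contain any of the high-capacity arcs; hence it consists solely of splitting-arcs $(v^-,v^+)$, and the corresponding set $X := \{v : (v^-,v^+)\in \text{cut}\}$ is an $(s,t)$-separator in $G$ of size equal to the cut capacity, disjoint from $\{s,t\}$. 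Conversely, an integer max flow in $G'$ of value $k$ decomposes into $k$ unit $(s,t)$-paths which, because each splitting-arc has capacity only~$1$, use pairwise disjoint intermediate vertices, and their projections are $k$ internally vertex-disjoint $(s,t)$-paths in $G$.

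The main technical care lies in checking this correspondence is tight in both directions: that every finite-capacity cut really uses only splitting arcs, and that flow paths in $G'$ project to \emph{simple} directed paths in $G$ rather than walks (which one handles by flow decomposition and cycle cancellation on an integral flow). Once this correspondence is established, the edge-case max-flow / min-cut theorem gives maximum flow $=$ minimum cut in $G'$, which translates to maximum number of internally vertex-disjoint $(s,t)$-paths $=$ minimum $(s,t)$-separator avoiding $\{s,t\}$ in $G$, completing the proof together with the easy inequality above.
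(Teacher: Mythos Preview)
Your proof sketch is correct and follows the standard vertex-splitting reduction to the arc-capacitated max-flow/min-cut theorem; the easy inequality and the correspondence you describe are accurate, and the non-adjacency of $s$ and $t$ is exactly what guarantees that cutting all splitting arcs yields a finite-capacity cut, forcing a minimum cut to avoid the high-capacity arcs.

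However, there is nothing to compare against here: the paper does not give its own proof of this statement. It is quoted as a classical result with a citation to Schrijver~\cite[Corollary 9.1a]{schrijver-combopt} and used as a black box in the proofs of Lemmas~\ref{lem:undirected-multicut-integrality-gap} and~\ref{lem:directed-multicut-integrality-gap}. So your proposal supplies a valid proof where the paper simply invokes the literature.
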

A \textit{fractional} \textit{(directed)} $(S_1, S_2)$-separator is a weight function that assigns every vertex in a graph a non-negative weight such that every \textit{(directed)} $(S_1, S_2)$-path has a weight of at least~$1$. The total weight of a fractional \textit{(directed)} separator is the sum of all vertex weights.

\subparagraph*{Linear programming notation} We employ well-known concepts from linear programming and refer to a textbook for additional background~\cite{schrijver-combopt}. A solution to a linear program (LP) where all variables are assigned an integral value is called an \emph{integral} solution. As we only consider LPs with a one-to-one correspondence between its variables and the vertices in a graph, integral solutions admit an alternative interpretation as vertex sets: the set of vertices whose corresponding variables are assigned a positive value. We use the interpretations of integral solutions as variable assignments or vertex sets interchangeably. We say that a minimization LP has an integrality gap of at most $c$ for some $c \in \R$ if the cost of an optimal integral solution is at most $c$ times the cost of an optimal fractional solution.

\section{Essential vertices for Vertex Hitting Set problems} 
\label{sec:essential-hitting-set}

Our positive contributions all build upon the same result from Bumpus et al.~\cite[Theorem~4.1]{bumpus-essential}, which relates integrality gaps of certain LPs to the existence of $c$-\textsc{Essential detection} algorithms. A slightly generalized formulation of this can be found below as \cref{thm:v-avoiding-lp}. First, we introduce the required background and notation.

The result indicates a strategy towards constructing a polynomial-time algorithm for $c$-\textsc{Essential detection for $\Pi$} for a vertex selection problem $\Pi$, by considering a specific special variant of that problem, that we refer to as its \textit{$v$-\textsc{Avoiding}} variant. It is defined almost identically to the original problem $\Pi$, but the input additionally contains a distinguished vertex $v \in V(G)$ which is explicitly forbidden to be part of a solution. 

The original theorem from Bumpus et al.~\cite{bumpus-essential} is specifically targeted at $\C$-\textsc{Deletion} problems for \textit{hereditary} graph classes $\C$. A graph class $\C$ is said to be hereditary when it exhibits the property that all induced subgraphs of a graph in $\C$ are again in $\C$. The corresponding $\C$-\textsc{Deletion} problem is that of finding a minimum size vertex set whose removal turns the input graph into one contained in $\C$. We remark however that the theorem holds for a broader collection of problems, namely those that can be described as \textit{\textsc{Vertex Hitting Set} problems}. To define which problems qualify as a \textsc{Vertex Hitting Set} problem, we first recall the definition of the well-known optimization problem \textsc{Hitting Set}, on which our definition of \textsc{Vertex Hitting Set} problems is based.

\optimization{Hitting Set}
{A universe $U$ and a collection $\mathcal{S} \subseteq 2^U$ of subsets of $U$.}
{A set $X \subseteq U$ such that $X \cap S \neq \emptyset$ for all $S \in \mathcal{S}$.}
{Find a feasible solution of minimum size.}

We define \textit{\textsc{Vertex Hitting Set} problems} as vertex selection problems that can be described as a special case of \textsc{Hitting Set} where the universe $U$ is the vertex set of the input graph and the collection $\mathcal{S}$ is encoded implicitly by the graph. 

This definition in particular contains all $\C$-\textsc{Deletion} problems for hereditary graph classes $\C$. This is because every hereditary graph class can be characterized by a (possibly infinite) set of forbidden induced subgraphs. A graph $G$ is in $\C$ if and only if none of its induced subgraphs are isomorphic to a forbidden induced subgraph. Therefore, a $\C$-\textsc{Deletion} instance $G$ is equivalent to the \textsc{Hitting Set} instance $(V(G), \mathcal{S})$, with $\mathcal{S}$ being the collection of all the vertex subsets that induce a forbidden subgraph in $G$.

Now, as mentioned, the $v$-\textsc{Avoiding} variants of vertex selection problems are of particular interest. A useful consequence of considering \textsc{Vertex Hitting Set} problems as special cases of \textsc{Hitting Set}, is that this yields a well-defined canonical LP formulation for such problems that can easily be modified to describe their $v$-\textsc{Avoiding} variant. This LP formulation is based on the following standard LP for a \textsc{Hitting Set} instance $(U, \mathcal{S})$, which uses variables~$x_u$ for every~$u \in U$:
\begin{align*}
    &&&&&& \text{minimize}   && \sum_{u \in U} x_u &&& &&&&&\\
    &&&&&& \text{subject to:} && \sum_{u \in S} x_u \geq 1 &&& \text{for every $S \in \mathcal{S}$} &&&&& \\
    &&&&&&                   && 0 \leq x_u \leq 1 &&& \text{for every $u \in U$} &&&&&
\end{align*}

\newcommand{\LP}[3]{\ensuremath{\mathrm{LP}_#1(#2, #3)}}
To describe the $v$-\textsc{Avoiding} variant of a \textsc{Vertex Hitting Set} problem, this LP can simply be modified by adding the constraint $x_v = 0$. For a given \textsc{Vertex Hitting Set} problem $\Pi$, a graph $G$ and a vertex $v \in V(G)$, we denote the resulting LP as $\LP{\Pi}{G}{v}$.

Although the original theorem from Bumpus et al.~\cite{bumpus-essential} makes a statement about $\C$-\textsc{deletion} problems only, it is not too hard to see that this statement also holds for any other \textsc{Vertex Hitting Set} problem. We therefore present this result as the following slight generalization.

\begin{theorem} \label{thm:v-avoiding-lp}
    Let $\Pi$ be a \textsc{Vertex Hitting Set} problem and let $c \in \R_{\geq 1}$. Then there exists a polynomial-time algorithm for $(c+1)$-\textsc{Essential detection for $\Pi$} if the following two conditions are met:
    \begin{enumerate}
        \item For all $G$ and $v \in V(G)$, there is a polynomial-time separation oracle for $\LP{\Pi}{G}{v}$.
        \item For all $G$ and $v \in V(G)$ for which $\{v\}$ solves $\Pi$ on $G$, the integrality gap of $\LP{\Pi}{G}{v}$ is at most $c$.
    \end{enumerate}
\end{theorem}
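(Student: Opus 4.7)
The plan is to adapt the LP-based approach of Bumpus et~al.\ to the Vertex Hitting Set setting. Given input $(G,k)$, the algorithm invokes the separation oracle from Condition~1 together with the ellipsoid method to solve $\LP{\Pi}{G}{v}$ in polynomial time for every $v \in V(G)$, obtaining its fractional optimum $LP^*_v$, and returns
\[
S := \{\, v \in V(G) : LP^*_v > k \,\}.
\]

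Guarantee~\ref{g1} follows immediately from the fact that $\LP{\Pi}{G}{v}$ is a relaxation of the integer $v$-avoiding variant of $\Pi$: if $\textsc{opt}_\Pi(G) \leq k$ and $v \in S$, then every integral $v$-avoiding solution has size at least $LP^*_v > k \geq \textsc{opt}_\Pi(G)$, so every optimal solution must contain $v$ and hence contains all of $S$.

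Guarantee~\ref{g2} is the delicate direction. Assume $\textsc{opt}_\Pi(G) = k$ and that $v$ is $(c+1)$-essential; I would prove $LP^*_v > k$ by contradiction. Suppose $LP^*_v \leq k$. Take any optimal solution $X^*$ of $\Pi$ on $G$; essentiality forces $v \in X^*$, so $X_1 := X^* \setminus \{v\}$ has size $k-1$ and hits every constraint $S \in \mathcal{S}$ with $v \notin S$. To hit the constraints that do contain $v$ while still avoiding $v$ itself, consider the sub-collection $\mathcal{S}_v := \{\, S \in \mathcal{S} : v \in S \,\}$, on which $\{v\}$ is trivially a hitting set so Condition~2 applies. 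The $v$-avoiding LP over $\mathcal{S}_v$ has fewer constraints than $\LP{\Pi}{G}{v}$, so its optimum is at most $LP^*_v \leq k$, and the integrality-gap bound then yields an integer $v$-avoiding hitting set $Y$ for $\mathcal{S}_v$ of size at most $c \cdot LP^*_v \leq ck$. The union $X_1 \cup Y$ avoids $v$, hits every $S \in \mathcal{S}$, and has size at most $(k-1) + ck < (c+1)k$, contradicting $(c+1)$-essentiality.

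The main obstacle is justifying the invocation of Condition~2 on the restricted sub-collection $\mathcal{S}_v$: although $(V(G), \mathcal{S}_v)$ is a hitting set instance on which $\{v\}$ is trivially a solution, it need not literally be the constraint family produced by some graph input to $\Pi$. I would resolve this by reading Condition~2 in the natural abstract way---as a bound on the $v$-avoiding LP whenever $\{v\}$ is an integrally feasible hitting set---which is precisely how the argument above uses it; in the concrete applications later in the paper the relevant sub-instance can also be recovered from the underlying graph directly, so nothing is lost. The remaining ingredients (polynomial-time LP solvability via Condition~1 and the set-theoretic combination of $X_1$ with $Y$) are routine.
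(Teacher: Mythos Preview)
Your algorithm (return $S = \{v : LP^*_v > k\}$) and your verification of Guarantee~G1 are correct and coincide with the approach the paper inherits from Bumpus et~al. The gap is precisely where you flag it: for Guarantee~G2 you invoke Condition~2 on the abstract hitting-set instance $(V(G), \mathcal{S}_v)$, but Condition~2 is promised only for \emph{graph inputs} to $\Pi$ on which $\{v\}$ is a solution, and $(V(G),\mathcal{S}_v)$ is generally not such an input (for \textsc{Cograph Deletion}, say, the family of induced $P_4$'s through a fixed vertex $v$ is not the $P_4$-family of any graph unless $G-v$ already happens to be a cograph). Your proposed resolution---to reinterpret Condition~2 abstractly, or to defer to the concrete applications---does not prove the theorem as stated.

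The fix, which is what the reference to Bumpus et~al.\ supplies, is to pass to the \emph{induced subgraph} $G' := G - X_1$ where $X_1 = X^* \setminus \{v\}$. This is a genuine (annotated) graph input to $\Pi$, and since $G' - v = G - X^*$ carries no constraint, $\{v\}$ solves $\Pi$ on $G'$, so Condition~2 applies verbatim. Restricting the optimal fractional solution of $\LP{\Pi}{G}{v}$ to $V(G')$ gives a feasible point for $\LP{\Pi}{G'}{v}$ (using that the constraints of $\Pi$ on $G - X$ are exactly those constraints of $G$ disjoint from $X$, which the paper records as the relevant property of \textsc{Vertex Hitting Set} problems), so the latter LP has optimum at most $LP^*_v \leq k$; the integrality-gap bound then produces an integral $v$-avoiding solution $Y$ to $G'$ of size at most $ck$, and $X_1 \cup Y$ is a $v$-avoiding solution to $G$ of size at most $(c+1)k - 1$, contradicting $(c+1)$-essentiality. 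The skeleton is exactly yours, but with $G'$ in place of $\mathcal{S}_v$ the invocation of Condition~2 becomes legitimate rather than a reinterpretation of the hypothesis.
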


This statement admits a proof that is almost identical to the proof by Bumpus et al.~\cite[Theorem~4.1]{bumpus-essential}. At any point in that proof where the assumption is used that $\Pi$ is a $\C$-\textsc{Deletion} problem for some hereditary $\C$, this assumption may be replaced by the property that any superset of a solution to $\Pi$ is also a solution. This property is satisfied for every \textsc{Vertex Hitting Set} problem. Otherwise, no changes to the proof are required. We therefore refer the reader to this prior work for the details of the proof.

Many known results about the approximation of \textsc{Hitting Set} or about the integrality gap of \textsc{Hitting Set} LPs consider the restriction to $d$-\textsc{Hitting Set}. This is the problem obtained by requiring every $S \in \mathcal{S}$ in the input to be of size at most $d$ for some positive integer $d$. Both upper bounds and lower bounds are known for the integrality gaps of the standard LP describing $d$-\textsc{Hitting Set} instances. The standard LP is the linear program given above for the general \textsc{Hitting Set} problem.

It is well-known that this LP has an integrality gap of at most $d$ and that there exist instances for which this bound is tight. This result is for example mentioned as an exercise in a book on approximation algorithms~\cite[Exercise 15.3]{vazirani-approx}, framed from the equivalent perspective of the \textsc{Set Cover} problem.

\section{Positive results}
\label{sec:positive-results}

This section contains our positive results for essential vertex detection. For three different problems $\Pi$ and corresponding values of $c$, we provide polynomial-time algorithms for $c$-\textsc{Essential detection for $\Pi$}. The first two of these, being strongly related, are presented in \cref{ssec:multicut-positive}. There, we provide $c$-\textsc{Essential detection} algorithms for \textsc{Vertex Multicut} and \textsc{Directed Vertex Multicut} with $c=3$ and $c=5$ respectively. Afterward, we provide a $3.5$-\textsc{Essential detection} algorithm for the \textsc{Cograph Deletion} problem in \cref{ssec:cograph-deletion}.

\subsection{Vertex Multicut}
\label{ssec:multicut-positive}

Our first two positive results concern the well-studied \textsc{Vertex Multicut} problem and its directed counterpart \textsc{Directed Vertex Multicut}. These are optimization problems defined as follows.

\problem{\textit{(Directed)} Vertex Multicut}
{A \textit{(directed)} graph $G$ and a set of \textit{(ordered)} vertex pairs $\mathcal{T} = \{(s_1, t_1), \ldots, (s_r, t_r)\}$ called the terminal pairs.}
{Find a minimum size vertex set $S \subseteq V(G)$ such that there is no $(s_i, t_i) \in \mathcal{T}$ for which $G - S$ contains a \textit{(directed)} $(s_i, t_i)$-path.}

We start by observing that both problems are \textsc{Vertex Hitting Set} problems: if we let~$\mathcal{P}_\mathcal{T}(G)$ be the collection of vertex subsets that form a \textit{(directed)} $(s_i, t_i)$-path in $G$, then the \textsc{\textit{(Directed)} Vertex Multicut} instance $(G, \mathcal{T})$ is equivalent to the \textsc{Hitting Set} instance~$(V(G), \mathcal{P}_\mathcal{T}(G))$. This interpretation of the problems as special cases of \textsc{Hitting Set} is also captured by the standard LP formulations of the problems, on which the $v$-\textsc{Avoiding} LP below is based:
\begin{align*}
    &&&&&& \text{minimize}   && \sum_{u \in V(G)} x_u &&& &&&&&\\
    &&&&&& \text{subject to:} && \sum_{u \in V(P)} x_u \geq 1 &&& \text{for every \textit{(directed)} path $P$ from some $s_i$ to $t_i$}&&&&& \\
    &&&&&&                   && x_v = 0 \\
    &&&&&&                   && 0 \leq x_u \leq 1 &&& \text{for } u \in V(G) 
\end{align*}
The set of constraints in this LP formulation not only depends on the structure of the input graph $G$, but also on the set $\mathcal{T}$ of terminal pairs. Hence, we denote the LP above as $\mathrm{LP_{VM}}(G, \mathcal{T}, v)$ for undirected $G$ or as $\mathrm{LP_{DVM}}(G, \mathcal{T}, v)$ for directed $G$. The standard LP formulations of \textsc{Vertex Multicut} and \textsc{Directed Vertex Multicut} are obtained by simply removing the constraint $x_v = 0$.

\subparagraph*{The undirected case} We start with the undirected version of the problem and show in \cref{lem:undirected-multicut-integrality-gap} that $\mathrm{LP_{VM}}(G, \mathcal{T}, v)$ has an integrality gap of at most $2$ for all \textsc{Vertex Multicut} instances $(G, \mathcal{T})$ where $v \in V(G)$ is such that $\{v\}$ is a solution. This bound yields a polynomial-time algorithm for $3$-\textsc{Essential detection for Vertex Multicut} as presented in \cref{thm:undirected-multicut-essential-detection}.

\begin{lemma} \label{lem:undirected-multicut-integrality-gap}
    Let $(G, \mathcal{T})$ be a \textsc{Vertex Multicut} instance with some $v \in V(G)$ such that $\{v\}$ is a solution for this instance. Then $\mathrm{LP_{VM}}(G, \mathcal{T}, v)$ has an integrality gap of at most $2$.
\end{lemma}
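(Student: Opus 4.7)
The plan is to convert any optimal fractional solution $x^*$ of $\mathrm{LP_{VM}}(G, \mathcal{T}, v)$ into an integral feasible solution of size at most $2\cdot\mathrm{LP}^*$ via a region-growing rounding around the distinguished vertex~$v$. The key structural observation, which drives everything, is that since $\{v\}$ is a multicut, every simple $(s_i,t_i)$-path in~$G$ passes through $v$, and since $x^*_v = 0$ the weight of such a path is unaffected by $v$ itself.

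I would first define for each $u \in V(G)$ the $x^*$-weighted vertex-distance to $v$,
\[
d(u) \;=\; \min_{P}\ \sum_{w \in V(P)} x^*_w,
\]
where the minimum is over simple $(u,v)$-paths in~$G$; note $d(v)=0$. The first step is to establish that $d(s_i)+d(t_i) \geq 1$ for every terminal pair $(s_i,t_i)$. For this I would take paths $P_1, P_2$ realising $d(s_i)$ and $d(t_i)$, let $u^*$ be the first vertex along $P_1$ (starting from $s_i$) that also lies on $P_2$ (such a vertex exists since $v \in V(P_1) \cap V(P_2)$), and splice the $s_i$-to-$u^*$ prefix of $P_1$ with the $u^*$-to-$t_i$ suffix of $P_2$. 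By the choice of $u^*$ the result is a simple $(s_i,t_i)$-path whose $x^*$-weight is at most $d(s_i)+d(t_i)$, and the LP constraint on this path then yields the claim. So for every terminal pair at least one of $d(s_i), d(t_i)$ is $\geq 1/2$.

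For the rounding, for a threshold $\theta \in (0,1/2)$ I define
\[
S_\theta \;=\; \bigl\{u \in V(G) : d(u)-x^*_u < \theta \leq d(u)\bigr\},
\]
and show that $S_\theta$ is a $v$-avoiding multicut. Fix a simple $(s_i,t_i)$-path $P$; assume $d(s_i)\geq 1/2 > \theta$ by the previous step, while $d(v)=0<\theta$. Along the $s_i$-to-$v$ portion $w_0=s_i, w_1, \dots, w_k=v$ of $P$, the value $d(w_j)$ transitions from $\geq \theta$ down below $\theta$. The ``edge triangle inequality'' $d(w_{j+1}) \geq d(w_j) - x^*_{w_j}$ (immediate from prepending the edge $(w_j,w_{j+1})$ to a shortest $(w_{j+1},v)$-path and shortcutting if needed) implies that at the transition step $j^*$ the vertex $w_{j^*}$ satisfies $d(w_{j^*})\geq \theta$ and $d(w_{j^*})-x^*_{w_{j^*}} \leq d(w_{j^*+1}) < \theta$, so $w_{j^*}\in S_\theta$. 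Also $v\notin S_\theta$ because $d(v)-x^*_v=0$, so $S_\theta$ gives a feasible integral solution of the LP.

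Finally, I would bound $|S_\theta|$ by averaging over $\theta$: a vertex $u$ belongs to $S_\theta$ precisely when $\theta$ lies in the half-open interval $(d(u)-x^*_u,\,d(u)]$, which has length at most $x^*_u$. Choosing $\theta$ uniformly in $(0,1/2)$ and using linearity of expectation,
\[
\mathbb{E}\bigl[|S_\theta|\bigr] \;=\; \sum_{u \in V(G)} \Pr\bigl[u \in S_\theta\bigr] \;\leq\; \sum_{u \in V(G)} \frac{x^*_u}{1/2} \;=\; 2\cdot \mathrm{LP}^*,
\]
so some $\theta$ achieves $|S_\theta| \leq 2\cdot \mathrm{LP}^*$; a concrete good $\theta$ can be found in polynomial time by trying the $\Oh(|V(G)|)$ critical values in $\{d(u), d(u)-x^*_u\}_{u \in V(G)}$. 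I expect the main subtlety to be in the splicing argument of the first step: because the two shortest paths from $s_i$ and $t_i$ to $v$ may share internal vertices, naive concatenation need not be simple, and it is the choice of $u^*$ as the \emph{first} intersection that both keeps the resulting $(s_i,t_i)$-path simple and ensures its $x^*$-weight does not exceed $d(s_i)+d(t_i)$ despite the overlap at $u^*$.
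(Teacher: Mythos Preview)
Your proof is correct. The structural core is identical to the paper's: both define the $x^*$-weighted distance from $v$, establish that for every terminal pair at least one endpoint lies at distance $\geq 1/2$ (your splicing argument is a more explicit version of the paper's one-line observation), and then extract an integral $v$-avoiding cut separating $v$ from the ``far'' vertices. The difference is purely in how the integral cut is obtained. The paper defines $D = \{u : d(u) \geq 1/2\}$, doubles $x^*$ to get a fractional $(v,D)$-separator of weight $2z$, and then applies Menger's theorem on an auxiliary graph with a super-sink attached to $D$ to obtain an integral $(v,D)$-separator of size at most $2z$. You instead use threshold rounding: the boundary set $S_\theta$ is directly an integral separator, and an averaging argument over $\theta \in (0,1/2)$ gives the size bound. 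Your route is more self-contained (no auxiliary graph, no appeal to Menger), while the paper's route makes the reduction to a single min-cut computation explicit and is immediately deterministic without the need to enumerate critical thresholds.
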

\begin{proof}
    Let $\mathbf{x} = (x_u)_{u \in V(G)}$ be an optimal solution to $\mathrm{LP_{VM}}(G, \mathcal{T}, v)$ and let $z = \sum_{u \in V(G)} x_u$ be its value. If we interpret the values of $x_u$, as given by $\mathbf{x}$, as vertex weights, then by definition of the LP, all $(s_i, t_i)$-paths have weight at least 1 for all $\{s_i, t_i\} \in \mathcal{T}$. Moreover, all such paths must pass through $v$ because $\{v\}$ is a solution, so we know for every $\{s_i, t_i\} \in \mathcal{T}$ that all $(s_i, v)$-paths or all $(t_i, v)$-paths (or both) have weight at least $\frac12$.

    We proceed by stating a reformulation of this property. Let $D \subseteq V(G)$ be the set of all vertices $u$ such that every $(u, v)$-path has weight at least $\frac12$. Then, the above property can also be described as follows: for every $\{s_i, t_i\} \in \mathcal{T}$, at least one of $s_i$ and $t_i$ is in $D$.

    Using this alternate formulation, it follows that every $(v, D)$-separator $X$ is also a valid solution to the given \textsc{Vertex Multicut} instance. To see this, consider an arbitrary $(s_i, t_i)$-path $P$ for some arbitrary $(s_i, t_i) \in \mathcal{T}$. Since $\{v\}$ is a solution, $P$ intersects $v$. If $s_i \in D$, then the fact that $X$ is a $(v, D)$-separator implies that $X$ intersects the subpath of $P$ between $v$ and $s_i$. The same holds for $t_i$. Since at least one of $s_i$ and $t_i$ is in $D$, it follows that $X$ must intersect $P$. Because $P$ was an arbitrary $(s_i, t_i)$-path for an arbitrary terminal pair $(s_i, t_i)$, $X$ hits all such paths and therefore it is a vertex multicut.
    
    Now to prove that $\mathrm{LP_{VM}}(G, \mathcal{T}, v)$ has an integrality gap of at most $2$, it suffices to show that there exists a $(v, D)$-separator $X \subseteq V(G)$ of size at most $2z$ that does not contain $v$. To see that this is indeed the case, we start by constructing a fractional $(v, D)$-separator $f \colon V(G) \rightarrow \R$ of weight at most $2z$ and with $f(v) = 0$. We obtain $f$ by simply doubling the values given by $\mathbf{x}$, i.e.: $f(u) := 2 x_u$ for all $u \in V(G)$. This step is inspired by a proof from Golovin, Nagarajan, and Singh that shows an upper bound on the integrality gap of a \textsc{Multicut} variant in trees \cite{doubling-trick}.
    
    We observe that indeed $f(v) = 2 \cdot x_v = 0$, since $\mathbf{x}$ is a solution to $\mathrm{LP_{VM}}(G, \mathcal{T}, v)$, which requires that $x_v = 0$. Furthermore, $D$ was constructed such that all paths from $v$ to a vertex in $D$ have a weight of at least $\frac12$ under the vertex weights as given by $\mathbf{x}$. Hence, under the doubled weights of $f$, all such paths have a weight of at least $1$, witnessing that $f$ is in fact a fractional $(v, D)$-separator.

    The final step of the proof is now to show that the existence of this \textit{fractional} $(v, D)$-separator of weight $2z$ implies the existence of an \textit{integral} $(v, D)$-separator of size at most $2z$ that does not contain~$v$. To do so, we use Menger's theorem on the auxiliary directed graph~$G'$ obtained from $G$ by turning all undirected edges into bidirected edges, while adding a sink node $t$ with incoming edges from all vertices in $D$.

    Consider a maximum collection $\mathcal{P}$ of internally vertex-disjoint directed $(v, t)$-paths in~$G'$. Let~$X \subseteq V(G') \setminus \{v,t\}$ be a directed~$(v,t)$-separator in~$G'$ of size~$|\mathcal{P}|$, whose existence is guaranteed by~\cref{thm:menger}. The construction of~$G'$ ensures that~$X$ is a~$(v,D)$-separator in~$G$ that does not contain~$v$, and therefore corresponds to an integral solution to $\mathrm{LP_{VM}}(G, \mathcal{T}, v)$. To bound the integrality gap by~$2$, it therefore suffices to prove that~$|\mathcal{P}| = |X| \leq 2z$.

    For each $(v,t)$-path~$P \in \mathcal{P}$ in~$G'$, the prefix obtained by omitting its endpoint~$t$ yields a~$(v,D)$-path in~$G$. Since~$f$ is a fractional $(v, D)$-separator, it must assign every such prefix of~$P \in \mathcal{P}$ a weight of at least~$1$. Because $f(v) = 0$ and because the paths in~$\mathcal{P}$ are internally vertex-disjoint, we find that the total weight of $f$ must be at least~$|\mathcal{P}| = |X|$. Since the weight of~$f$ is at most~$2z$, we find that $|\mathcal{P}| = |X| \leq 2z$. This concludes the proof.
\end{proof}

We can even construct \textsc{Vertex Multicut} instances $(G, \mathcal{T})$ that are solved by some $\{v\} \subseteq V(G)$ for which the integrality gap of $\mathrm{LP_{VM}}(G, \mathcal{T}, v)$ is arbitrarily close to $2$, showing that the bound in \cref{lem:undirected-multicut-integrality-gap} is tight. To construct such an instance, let $G$ be a (large) star graph, let $v \in V(G)$ be its center and let $\mathcal{T} = \binom{V(G) \setminus \{v\}}{2}$. Clearly, $\{v\}$ is a solution to the \textsc{Vertex Multicut} instance $(G, \mathcal{T})$. 

To determine the integrality gap of $\mathrm{LP_{VM}}(G, \mathcal{T}, v)$, we first note that any solution to the \textsc{Vertex Multicut} instance that avoids $v$ must, at least, include all but one of the leaves from $G$. Any such set is indeed a solution, which shows that the smallest integral solution to $\mathrm{LP_{VM}}(G, \mathcal{T}, v)$ has value $|V(G)| - 2$. A smaller fractional solution to the program may be obtained by assigning every leaf of $G$ a value of $\frac12$, which would yield a solution with a total value of $\frac12 \cdot (|V(G)| - 1)$. Observe that such a construction of $G$, $\mathcal{T}$, and $v$ can be used to get an LP with an integrality gap arbitrarily close to $2$ by having the star graph $G$ be arbitrarily large.

Regardless of the bound on the integrality gap being tight, \cref{lem:undirected-multicut-integrality-gap} and \cref{thm:v-avoiding-lp} combine to prove the following result.

\begin{theorem}\label{thm:undirected-multicut-essential-detection}
    There exists a polynomial-time algorithm for $3$-\textsc{Essential detection for Vertex Multicut}.
\end{theorem}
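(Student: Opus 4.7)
The plan is to derive \cref{thm:undirected-multicut-essential-detection} directly from \cref{thm:v-avoiding-lp} applied to $\Pi = $ \textsc{Vertex Multicut} with $c = 2$, which would yield a polynomial-time algorithm for $(2+1)$-\textsc{Essential detection for Vertex Multicut}, exactly as desired. Since \textsc{Vertex Multicut} has already been cast as a \textsc{Vertex Hitting Set} problem at the beginning of \cref{ssec:multicut-positive} (the sets to be hit being the vertex sets of $(s_i, t_i)$-paths), the theorem applies, and it remains only to verify its two prerequisites.

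The integrality-gap condition is exactly the content of \cref{lem:undirected-multicut-integrality-gap}, which establishes the upper bound of $2$ on the integrality gap of $\mathrm{LP_{VM}}(G, \mathcal{T}, v)$ whenever $\{v\}$ is a feasible solution to the multicut instance---precisely the setting assumed by \cref{thm:v-avoiding-lp}. For the polynomial-time separation oracle required by the first condition, the strategy is as follows. Given a candidate assignment $\mathbf{x} = (x_u)_{u \in V(G)}$, the box constraints $0 \leq x_u \leq 1$ and the equality $x_v = 0$ are trivially checked in linear time. For the path constraints, interpret $\mathbf{x}$ as a non-negative vertex weighting and, for each terminal pair $(s_i, t_i) \in \mathcal{T}$, compute a minimum-weight $(s_i, t_i)$-path using any standard shortest-path algorithm on vertex-weighted graphs. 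If some such minimum-weight path has total weight strictly less than $1$, the vertex set of that path yields an explicit violated constraint; otherwise all path constraints are satisfied. Since $|\mathcal{T}|$ is polynomial in the input size, the whole procedure runs in polynomial time.

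I do not foresee any real obstacle here: the heavy lifting is done by \cref{lem:undirected-multicut-integrality-gap}, and the separation oracle is routine given that violated path constraints can be detected by a shortest-path computation per terminal pair. The only point worth double-checking is that the violated constraint returned by the oracle (a simple path) has polynomial encoding size, which is immediate.
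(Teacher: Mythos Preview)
Your proposal is correct and follows essentially the same approach as the paper: invoke \cref{thm:v-avoiding-lp} with $c=2$, use \cref{lem:undirected-multicut-integrality-gap} for the integrality-gap condition, and supply the separation oracle via a shortest-path computation per terminal pair. The paper's proof is slightly terser but otherwise identical.
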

\begin{proof}
    As shown in \cref{lem:undirected-multicut-integrality-gap}, $\mathrm{LP_{VM}}(G, \mathcal{T}, v)$ has an integrality gap of at most $2$ when~$\{v\}$ is a solution to the \textsc{Vertex Multicut} instance $(G, \mathcal{T})$. Secondly, this LP has a very simple polynomial time separation oracle in general: for every terminal pair $\{s_i, t_i\}$, we can find a minimum weight $(s_i, t_i)$-path in polynomial time using a shortest path algorithm and check whether it has at least weight $1$. Although most shortest-path algorithms are defined for edge-weighted graphs, it takes only minor modifications to make standard algorithms, like Dijkstra's algorithm \cite{dijkstras-algorithm}, work on vertex-weighted graphs as well. Hence, it follows from \cref{thm:v-avoiding-lp} that there is a polynomial-time algorithm for $3$-\textsc{Essential detection for Vertex Multicut}.
\end{proof}

The algorithm to detect $3$-essential vertices leads in a black-box fashion to a search-space reduction guarantee for the current-best algorithm for solving \textsc{Vertex Multicut} due to Marx and Razgon~\cite{MarxR14}. This follows from a result of Bumpus et al.~\cite[Theorem 5.1]{bumpus-essential} (\cref{sec:essential:to:search:space}). While they originally stated their connection between essential detection and search-space reduction for \textsc{$\mathcal{C}$-Deletion} problems, it is easy to see that the same proof applies for any \textsc{Vertex Hitting Set} problem: the only property of \textsc{$\mathcal{C}$-Deletion} that is used in their proof is that for any vertex set~$X \subseteq V(G)$, a vertex set~$Y \subseteq V(G-X)$ is a solution to~$G-X$ if and only if~$X \cup Y$ is a solution to~$G$; this property holds for any \textsc{Vertex Hitting Set} problem.

\begin{corollary} \label{cor:multicut:searchspace}
    There is an algorithm that, given a \textsc{Vertex Multicut} instance~$(G, \mathcal{T})$ on~$n$ vertices, outputs an optimal solution in time~$2^{\Oh(\ell^3)} \cdot n^{\Oh(1)}$, where~$\ell$ is the number of vertices in an optimal solution that are \emph{not} 3-essential.
\end{corollary}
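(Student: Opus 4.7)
The plan is a direct black-box composition of three already-available ingredients. First, \cref{thm:undirected-multicut-essential-detection} provides a polynomial-time algorithm for \textsc{$3$-Essential detection for Vertex Multicut}. Second, the algorithm of Marx and Razgon~\cite{MarxR14} solves \textsc{Vertex Multicut} exactly in time $f(k) \cdot n^{\Oh(1)}$ with $f(k) = 2^{\Oh(k^3)}$, where~$k$ upper-bounds the size of an optimal solution. Third, the search-space reduction framework of Bumpus et al.\ (referenced in \cref{sec:essential:to:search:space}) combines any polynomial-time $c$-essential detection algorithm with an $f(k) \cdot n^{\Oh(1)}$ exact FPT solver into an algorithm that computes an optimal solution in time $f(\ell) \cdot n^{\Oh(1)}$, where~$\ell$ is the number of vertices in some optimal solution that are \emph{not} $c$-essential.

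Instantiating this framework with $\Pi = \textsc{Vertex Multicut}$, $c = 3$, and $f(k) = 2^{\Oh(k^3)}$ immediately yields the claimed bound of $2^{\Oh(\ell^3)} \cdot n^{\Oh(1)}$. The only point requiring justification is applicability: the framework of Bumpus et al.\ is originally stated for $\mathcal{C}$-\textsc{Deletion} problems on hereditary classes~$\mathcal{C}$, whereas \textsc{Vertex Multicut} is not of that form. As noted in the paragraph preceding the corollary, the only structural property of $\mathcal{C}$-\textsc{Deletion} invoked in their proof is decomposability of solutions --- for every $X \subseteq V(G)$, a set $Y \subseteq V(G - X)$ solves the problem on $G - X$ if and only if $X \cup Y$ solves it on $G$ --- and this holds for every \textsc{Vertex Hitting Set} problem. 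Since \cref{ssec:multicut-positive} already established that \textsc{Vertex Multicut} is a \textsc{Vertex Hitting Set} problem, the framework applies verbatim.

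There is essentially no obstacle in the proof; the entire argument is a one-line invocation of the Bumpus et al.\ framework with the three ingredients above. The only mild thing to check while writing out the details is that the dovetailing scheme inside the framework tolerates the non-linear exponent~$k^3$ inside $f$, but since the framework treats $f$ as an arbitrary computable function and applies it to a guessed value bounded by~$\Oh(\ell)$, substituting $f(\ell) = 2^{\Oh(\ell^3)}$ is immediate and preserves the overall running time of $2^{\Oh(\ell^3)} \cdot n^{\Oh(1)}$.
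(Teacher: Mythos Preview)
Your proposal is correct and matches the paper's approach exactly: the paper derives the corollary as a black-box combination of \cref{thm:undirected-multicut-essential-detection}, the Marx--Razgon FPT algorithm, and Theorem~5.1 of Bumpus et al.\ (\cref{thm:non-essentiality}), noting---just as you do---that the latter extends from $\mathcal{C}$-\textsc{Deletion} to any \textsc{Vertex Hitting Set} problem via the solution-decomposability property.
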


\subparagraph*{The directed case} Keeping in mind the techniques used to prove \cref{lem:undirected-multicut-integrality-gap}, we proceed to the next problem: \textsc{Directed Vertex Multicut}. By similar arguments, we find the $v$-\textsc{Avoiding} LP of this problem to have a bounded integrality gap as well. However, adaptations to these arguments are required to take the directions of edges into consideration, yielding a higher bound on the integrality gap of the directed version of the problem.

\begin{lemma} \label{lem:directed-multicut-integrality-gap}
    Let $(G, \mathcal{T})$ be a \textsc{Directed Vertex Multicut} instance with some $v \in V(G)$ such that $\{v\}$ is a solution for it. Then $\mathrm{LP_{DVM}}(G, \mathcal{T}, v)$ has an integrality gap of at most $4$.
\end{lemma}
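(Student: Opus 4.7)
The plan is to adapt the proof of \cref{lem:undirected-multicut-integrality-gap}, but to handle the two ``halves'' of each directed $s_i \to t_i$ path separately. Let $\mathbf{x} = (x_u)_{u \in V(G)}$ be an optimal solution to $\mathrm{LP_{DVM}}(G, \mathcal{T}, v)$ of value $z$, interpreted as vertex weights. Since $\{v\}$ solves the instance, every directed $(s_i, t_i)$-path passes through $v$. If, for some $i$, there were both a directed $(s_i, v)$-path of weight less than $\tfrac12$ and a directed $(v, t_i)$-path of weight less than $\tfrac12$, then concatenating them would yield a directed $s_i \to t_i$ walk of weight less than $1$; extracting from it a simple directed subpath -- which, by non-negativity of $\mathbf{x}$, has no larger total vertex-weight -- contradicts the LP constraint. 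Hence for every $(s_i, t_i) \in \mathcal{T}$, either every directed $(s_i, v)$-path has weight at least $\tfrac12$, or every directed $(v, t_i)$-path has weight at least $\tfrac12$.

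Next I introduce two analogues of the set $D$ from the undirected proof: let $D_{\mathrm{in}}$ be the set of vertices $u$ such that every directed $(u, v)$-path has weight at least $\tfrac12$, and let $D_{\mathrm{out}}$ be the set of vertices $u$ such that every directed $(v, u)$-path has weight at least $\tfrac12$. The observation above can then be rephrased as: for every $(s_i, t_i) \in \mathcal{T}$, either $s_i \in D_{\mathrm{in}}$ or $t_i \in D_{\mathrm{out}}$. The key structural claim is that, for any directed $(D_{\mathrm{in}}, v)$-separator $X_1$ and any directed $(v, D_{\mathrm{out}})$-separator $X_2$ in $G$ (neither containing $v$), the union $X_1 \cup X_2$ is a valid directed vertex multicut: any directed $(s_i, t_i)$-path $P$ splits at $v$ into an $(s_i, v)$-prefix and a $(v, t_i)$-suffix, and depending on which of $s_i \in D_{\mathrm{in}}$ or $t_i \in D_{\mathrm{out}}$ holds, the corresponding half of $P$ is hit by $X_1$ or $X_2$.

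To bound $|X_1| + |X_2|$, I reuse the doubling trick. The function $f(u) := 2 x_u$ has total weight at most $2z$, satisfies $f(v) = 0$, and by construction of $D_{\mathrm{in}}$ and $D_{\mathrm{out}}$ it is simultaneously a fractional directed $(D_{\mathrm{in}}, v)$-separator and a fractional directed $(v, D_{\mathrm{out}})$-separator. Exactly as in \cref{lem:undirected-multicut-integrality-gap}, each fractional separator is rounded to an integral one of size at most $2z$ by invoking \cref{thm:menger} on an auxiliary directed graph: for $X_1$, extend $G$ with a fresh source having outgoing edges to each vertex of $D_{\mathrm{in}}$, and separate it from $v$; for $X_2$, extend $G$ with a fresh sink having incoming edges from each vertex of $D_{\mathrm{out}}$, and separate $v$ from it. Since each internally vertex-disjoint path provided by Menger's theorem contains a subpath in $G$ of $f$-weight at least $1$, and these subpaths are internally vertex-disjoint, the number of such paths -- and hence the size of the minimum separator -- is at most $2z$. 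Adding the two bounds gives $|X_1 \cup X_2| \leq 4z$, establishing the claimed integrality gap of $4$.

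The step requiring the most care will be the walk-to-path reduction in the directed setting, since unlike in the undirected case one cannot freely reverse an edge traversal; confirming that the Menger-based rounding produces separators inside $V(G) \setminus \{v\}$ (in particular that $v \notin D_{\mathrm{in}} \cup D_{\mathrm{out}}$, which follows from $x_v = 0$ once trivial one-vertex paths are admitted) is a smaller but necessary check. Beyond these, the proof essentially transcribes the undirected argument, with the factor-$2$ degradation in the integrality gap arising purely from having to construct two separators rather than one.
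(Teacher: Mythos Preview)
Your proposal is correct and follows essentially the same approach as the paper: define the ``in'' and ``out'' halves $D_{\mathrm{in}}, D_{\mathrm{out}}$ (the paper calls them $S$ and $T$), double the LP values to obtain a fractional $(D_{\mathrm{in}}, v)$- and $(v, D_{\mathrm{out}})$-separator of weight $2z$, round each to an integral separator of size at most $2z$ via Menger's theorem on an auxiliary graph with an added source/sink, and take the union. Your walk-to-path justification and the check that $v \notin D_{\mathrm{in}} \cup D_{\mathrm{out}}$ are in fact slightly more explicit than the paper's treatment, but the argument is the same.
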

\begin{proof}
    Our proof starts similarly to the proof of \cref{lem:undirected-multicut-integrality-gap}. We let $\mathbf{x} = (x_u)_{u \in V(G)}$ be an optimal solution to the $v$-\textsc{Avoiding} LP and let $z = \sum_{u \in V(G)} x_u$ be its value. Again, we interpret the values given to each of the vertices by $\mathbf{x}$ as vertex weights. Then by definition of the LP, all directed $(s_i, t_i)$-paths have weight at least $1$ for all $(s_i, t_i) \in \mathcal{T}$. Moreover, because $\{v\}$ is a solution, all such paths must pass through~$v$. Hence, we know that for every $(s_i, t_i) \in \mathcal{T}$ all directed $(s_i, v)$-paths or all directed $(t_i, v)$-paths have weight at least $\frac12$.
     
    We again proceed by stating a reformulation of this property. This time, however, we define two sets of vertices, to cope with the directionality of paths in the graph. We let $S \subseteq V(G)$ be the set of all vertices $u$ such that every directed $(u, v)$-path has weight at least $\frac12$ and we let $T$ be the set of all vertices $u$ such that every directed $(v, u)$-path has weight at least $\frac12$. Then, the above property can also be described as follows: for every $(s_i, t_i) \in \mathcal{T}$, we have that $s_i \in S$ or $t_i \in T$.

    This allows us to see that every set $X$ that is both a directed $(S, v)$-separator and a directed $(v, T)$-separator is also a valid solution for the given \textsc{Directed Vertex Multicut} instance. Consider to this end an arbitrary directed path $P$ from $s_i$ to $t_i$ for some arbitrary $(s_i, t_i) \in \mathcal{T}$. Since $\{v\}$ is a solution, $P$ intersects $v$. If $s_i \in S$, then the fact that $X$ is a directed $(S, v)$-separator implies that $X$ intersects the prefix of $P$ from $s_i$ to $v$. Likewise, if $t_i \in T$, then the fact that $T$ is a directed $(v,T)$-separator implies that $X$ intersects the suffix of $P$ from $v$ to $t_i$. Since at least one of the two holds, $X$ intersects $P$. Because $P$ was an arbitrary directed $(s_i, t_i)$-path for an arbitrary terminal pair $(s_i, t_i)$, $X$ hits all such paths and therefore it is a vertex multicut.
     
    Now to prove that $\mathrm{LP_{DVM}}(G, \mathcal{T}, v)$ has an integrality gap of at most $4$, it suffices to show that there exists such a set $X \subseteq V(G)$ of size at most $4z$ that does not contain $v$. To see that this is indeed the case, we start by defining an assignment $f \colon V(G) \rightarrow \R$ of fractional values to the vertices, of total weight $2z$, that describes both a fractional directed $(S, v)$-separator and a fractional directed $(v, T)$-separator, which has $f(v) = 0$. We obtain $f$ by doubling the values given by $\mathbf{x}$, i.e.: $f(u) := 2x_u$ for all $u \in V(G)$. 

    Of course, $\sum_{u \in V(G)} f(u) = 2 \cdot \sum_{u \in V(G)} x_u = 2z$. We also observe that $f(v) = 2 \cdot x_v = 0$, since $\mathbf{x}$ is a solution to $\mathrm{LP_{DVM}}(G, \mathcal{T}, v)$, which requires that $x_v = 0$. Furthermore, $S$ was defined such that all directed paths from $S$ to $v$ have a weight of at least $\frac12$ under the vertex weights as given by $\mathbf{x}$. Hence, under the doubled weights of~$f$, all such paths have a weight of at least $1$, witnessing that $f$ is a fractional directed $(S, v)$-separator. Likewise, all paths from $v$ to a vertex in $T$ have a weight of at least $1$ under the doubled weights of $f$, witnessing that $f$ is also a fractional directed $(v, T)$-separator.

    Next, we show that the existence of the fractional separator $f$ implies the existence of an integral separator that is at most~$4$ times as large. We do this by showing that a directed $(S, v)$-separator $X_S$ and a directed $(v, T)$-separator $X_T$ exist that are each of size at most $2z$ and that do not include $v$. These two separators combine to form a set that is both a directed $(S, v)$-separator and a directed $(v, T)$-separator. As this results in a separator of size at most $4z$ that does not include $v$, this suffices to prove the statement.
     
    We start by proving that $X_S$ exists as described and we do so by applying Menger's theorem on an auxiliary graph obtained from $G$. This graph $G'$ may be obtained from $G$ by adding a source node $s$ to the graph with outgoing arcs to all vertices in~$S$. Consider a maximum collection~$\mathcal{P}$ of internally vertex-disjoint directed $(s, v)$-paths in~$G'$. Let~$X_S \subseteq V(G') \setminus \{s, v\}$ be a directed $(s, v)$-separator in~$G'$ of size~$|\mathcal{P}|$, whose existence is guaranteed by \cref{thm:menger}. The construction of~$G'$ ensures that~$X_S$ is a directed $(S, v)$-separator in $G$ that does not contain $v$. We proceed by showing that $|X_S| = |\mathcal{P}| \leq 2z$.

    For each directed $(s, t)$-path $P \in \mathcal{P}$ in $G'$, the suffix obtained by omitting its first vertex $s$ yields a directed $(S, v)$-path in $G$. Since $f$ is a fractional directed $(S, v)$-separator, it must assign every such suffix of a path $P \in \mathcal{P}$ a weight of at least $1$. Because $f(v) = 0$ and because the paths in $\mathcal{P}$ are internally vertex-disjoint, we find that the total weight of $f$ must be at least $|\mathcal{P}| = |X|$. Since the weight of $f$ is at most $2z$, we find that $|X| = |\mathcal{P}| \leq 2z$.

    By similar arguments, we can show that there also exists a directed $(v, T)$-separator $X_T$ in $G$ of size at most $2z$ that does not include $v$. To do so, we would construct $G'$ from $G$ not by adding the source node $s$, but by adding a sink node $t$ with incoming arcs from all vertices in $T$. The remaining argumentation follows analogously.

    Having shown the existence of the separators $X_S$ and $X_T$, we can combine them to conclude that $X_S \cup X_T$ is both a directed $(S, v)$-separator and a directed $(v, T)$-separator that does not include $v$ and that $|X_S \cup X_T| \leq |X_S| + |X_T| \leq 4z$. This concludes the proof.
 \end{proof}

Similar to the undirected setting, this upper bound on the integrality gap leads to the following algorithmic result when combined with \cref{thm:v-avoiding-lp}.

\begin{theorem} \label{thm:dvm-detection-positive}
    There exists a polynomial-time algorithm for $5$-\textsc{Essential detection for Directed Vertex Multicut}.
\end{theorem}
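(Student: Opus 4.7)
The plan is to mimic the structure of the proof of \cref{thm:undirected-multicut-essential-detection}, substituting the directed integrality gap bound in place of the undirected one. Concretely, I would combine \cref{lem:directed-multicut-integrality-gap} with \cref{thm:v-avoiding-lp} (instantiated at $c = 4$, so that we obtain $(c+1) = 5$-\textsc{Essential detection}) and the problem $\Pi = $ \textsc{Directed Vertex Multicut}.

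To apply \cref{thm:v-avoiding-lp}, two conditions must be verified. The first, the integrality-gap bound, is precisely what \cref{lem:directed-multicut-integrality-gap} supplies: for any \textsc{Directed Vertex Multicut} instance $(G, \mathcal{T})$ and any $v \in V(G)$ for which $\{v\}$ already solves the instance, the linear program $\mathrm{LP_{DVM}}(G, \mathcal{T}, v)$ has integrality gap at most $4$. The second condition is the existence of a polynomial-time separation oracle for $\mathrm{LP_{DVM}}(G, \mathcal{T}, v)$. This I would argue exactly as in the undirected case: given a candidate assignment $(x_u)_{u \in V(G)}$, interpret $x_u$ as a non-negative weight on vertex $u$, and for each terminal pair $(s_i, t_i) \in \mathcal{T}$ compute a minimum-weight \emph{directed} $(s_i, t_i)$-path using a standard shortest-path algorithm (e.g.\ a vertex-weighted variant of Dijkstra's algorithm on the directed graph $G$). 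If some such path has weight strictly less than $1$, then the constraint corresponding to that path separates the assignment; otherwise every path constraint is satisfied, and we can additionally check the box constraints and the constraint $x_v = 0$ directly.

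Having established both conditions, \cref{thm:v-avoiding-lp} immediately yields a polynomial-time algorithm for $5$-\textsc{Essential detection for Directed Vertex Multicut}, which is the claim of \cref{thm:dvm-detection-positive}.

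I do not anticipate any real obstacles: all of the nontrivial work has already been carried out in \cref{lem:directed-multicut-integrality-gap}, and the separation oracle is routine once one observes that vertex-weighted shortest paths in directed graphs can be computed in polynomial time. The proof is therefore essentially a short assembly step.
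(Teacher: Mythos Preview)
Your proposal is correct and matches the paper's own approach essentially verbatim: the paper also notes that the proof is almost identical to that of \cref{thm:undirected-multicut-essential-detection}, invoking \cref{lem:directed-multicut-integrality-gap} for the integrality-gap condition and observing that the same shortest-path separation oracle works for directed graphs. There is nothing to add.
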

This statement admits a proof that is almost identical to the proof of \cref{thm:undirected-multicut-essential-detection}, since the shortest-path algorithm that provides the separation oracle of the \textsc{Vertex Multicut} LP can also take directed graphs as input.

\subsection{Cograph Deletion}
\label{ssec:cograph-deletion}

Our next positive result concerns the \textsc{Cograph Deletion} problem. As this is a specific case of $\mathcal{C}$-\textsc{Deletion}, this is more in line with the original research direction for $c$-\textsc{Essential detection} introduced by Bumpus et al.~\cite{bumpus-essential}, where a framework was built around $\mathcal{C}$-\textsc{Deletion} problems. The \textsc{Cograph Deletion} problem is defined as follows.

\problem{Cograph Deletion}
{An undirected graph $G$.}
{Find a minimum size set $S \subseteq V(G)$ such that $G - S$ is a cograph (i.e.: $G-S$ does not contain a path on $4$ vertices as an induced subgraph).}

We start by observing that the \textsc{Cograph Deletion} problem is a \textsc{Vertex Hitting Set} problem: if we let $\mathcal{P}_4(G)$ be the collection of vertex subsets that induce a $P_4$ in $G$, then the \textsc{Cograph Deletion} instance $G$ is equivalent to the \textsc{Hitting Set} instance $(V(G), \mathcal{P}_4(G))$. Again, motivated by \cref{thm:v-avoiding-lp}, we study the $v$-\textsc{Avoiding} LP for this problem:
\begin{align*}
    &&&&&& \text{minimize}   && \sum_{u \in V(G)} x_u &&& &&&&&\\
    &&&&&& \text{subject to:} && \sum_{u \in V(H)} x_u \geq 1 &&& \text{for every induced subgraph $H$ of $G$ isomorphic to $P_4$}&&&&& \\
    &&&&&&                   && x_v = 0 \\
    &&&&&&                   && 0 \leq x_u \leq 1 &&& \text{for } u \in V(G) &&&&&
\end{align*}
For a given graph $G$ and vertex $v \in V(G)$, we denote the LP above as $\mathrm{LP_{CD}}(G, v)$. Whenever $v$ is such that $G - v$ is a cograph, the resulting LP admits a simple upper bound on the integrality gap. This bound is derived from the observation that the $v$-\textsc{Avoiding Cograph Deletion} problem is a special case of $3$-\textsc{Hitting Set}: the vertex sets to be hit in the problem are the triplets of vertices that, together with $v$, induce a $P_4$ in $G$. As the natural LP describing $3$-\textsc{Hitting Set} has an integrality gap of at most~$3$, it follows that the natural LP formulation of $v$-\textsc{Avoiding Cograph Deletion}, to which the above LP is equivalent, also has an integrality gap of at most~$3$.

This section is dedicated to proving a stronger result than this trivial bound. We prove that, whenever $v$ is such that $G-v$ is a cograph, $\mathrm{LP_{CD}}(G, v)$ has an integrality gap of at most $2.5$. To prove this, we use a method inspired by iterative rounding \cite{iterative-rounding}, where an approximate integral solution can be obtained by solving the LP, picking all vertices that receive a large enough value, updating the LP to no longer contain these vertices and repeating these steps until a solution is found. 

For our purposes, we consider values of at least $0.4$ to be `large enough'. However, we will see that an extension to the original method is required since $\mathrm{LP_{CD}}(G, v)$ is not guaranteed to always have an optimal solution that assigns at least one vertex a value of~$\geq 0.4$. 
This issue is reflected in the inductive proof below by having the step case split into two subcases. The first of these deals with the standard iterative rounding setup, while the second subcase deals with the possibility of an optimal solution not assigning any vertex a large value.

\newcommand{\LPorig}{\mathrm{LP_{CD}}(G, v)}
\newcommand{\xorig}{\mathbf{x}}
\newcommand{\zorig}{z}
\newcommand{\Vlarge}{V_{\geq0.4}}
\newcommand{\LPres}{\mathrm{LP_{CD}}(G - \Vlarge, v)}
\newcommand{\zres}{z_\mathrm{res}}
\newcommand{\Vres}{V_{\mathrm{res}}}
\newcommand{\opt}[1]{\mathrm{OPT}(#1)}

\begin{lemma} \label{lem:cograph-deletion-integrality-gap}
    Let $G$ be a graph and let $v \in V(G)$ be such that $G - v$ is a cograph. Then $\mathrm{LP_{CD}}(G, v)$ has an integrality gap of at most $2.5$.
\end{lemma}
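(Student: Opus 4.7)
The plan is to proceed by induction on $|V(G)|$ via an iterative-rounding-style argument with threshold $0.4$; this choice is motivated by $1/0.4 = 2.5$, so that any vertex taken into the integral solution at this threshold costs at most a factor $2.5$ of its fractional LP contribution. The base case is when $G$ contains no induced $P_4$, in which case $G$ itself is a cograph, the LP optimum $z$ is $0$, and the empty set is a valid integral solution. Throughout the induction, the property that $G - v$ is a cograph is preserved under vertex deletion, since any induced subgraph of a cograph is a cograph.

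For the inductive step, fix an optimal solution $\xorig = (x_u)_{u \in V(G)}$ to $\LPorig$ of value $z$, and split based on whether $\Vlarge := \{u \in V(G) \setminus \{v\} : x_u \geq 0.4\}$ is nonempty. In the \emph{first subcase}, $\Vlarge \neq \emptyset$. The restriction of $\xorig$ to $V(G) \setminus \Vlarge$ is feasible for $\LPres$, since every induced $P_4$ of $G - \Vlarge$ is also one of $G$, witnessing that its optimum is at most $\zres := z - \sum_{u \in \Vlarge} x_u$. Because $(G - \Vlarge) - v$ remains a cograph, the induction hypothesis yields an integral solution $\Vres$ of $\LPres$ with $|\Vres| \leq 2.5\,\zres$. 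Then $\Vres \cup \Vlarge$ is feasible for $\LPorig$ and, using $|\Vlarge| \leq \sum_{u \in \Vlarge} x_u / 0.4 = 2.5(z - \zres)$, its size is at most $2.5\,z$.

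The \emph{second subcase}, when $\Vlarge = \emptyset$ and hence $x_u < 0.4$ for every $u$, is where the real work lies. Because $G - v$ is a cograph, every induced $P_4$ of $G$ contains $v$, so the LP is effectively a $3$-\textsc{Hitting Set} LP whose hyperedges are the triples $\{a,b,c\} \subseteq V(G) \setminus \{v\}$ with $G[\{v,a,b,c\}] \cong P_4$. Every positive $x_u$ must lie in some tight constraint (else it could be decreased, contradicting optimality), and in such a constraint the other two non-$v$ values are each below $0.4$ and therefore sum to less than $0.8$, forcing $x_u > 0.2$. Thus the support of $\xorig$ could be as large as $z/0.2 = 5z$, so merely rounding up the support is insufficient. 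The plan here is to exploit the cograph structure of $G - v$ to either modify $\xorig$ into an alternative optimal LP solution in which some vertex reaches value $0.4$ (reducing to Subcase~1), or to directly construct an integral hitting set of size at most $2.5\,z$ by structural analysis of the induced $P_4$s through $v$, distinguishing according to whether $v$ appears at an endpoint or an internal position of the path and using the resulting adjacency constraints on $\{a,b,c\}$.

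I expect Subcase~2 to be the main obstacle: treating the problem as a generic $3$-\textsc{Hitting Set} only yields integrality gap $3$, so beating the threshold to $2.5$ requires genuinely using that all hyperedges arise from $P_4$s through a common vertex $v$ in a graph whose other vertices induce a cograph. Identifying the right structural lemma that converts ``no variable carries LP mass $0.4$'' into a concrete combinatorial property of these $P_4$s is the technical crux of the proof.
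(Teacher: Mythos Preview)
Your overall framework---iterative rounding with threshold $0.4$, induction, and the split into $\Vlarge \neq \emptyset$ versus $\Vlarge = \emptyset$---matches the paper exactly, and your Case~1 is correct and essentially identical to theirs. You are also right that Case~2 is where the content lies. However, your proposal stops short of the actual idea needed there; what you outline (modifying $\xorig$ to push some variable up to $0.4$, or a case analysis on whether $v$ is an endpoint or internal vertex of the $P_4$, exploiting cograph decomposition of $G-v$) is not what the paper does and is more complicated than necessary.

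The missing observation is this: in the graph $P_4$, \emph{every} vertex has at least one neighbor and at least one non-neighbor among the other three. Since $G-v$ is a cograph, every induced $P_4$ of $G$ contains $v$, and therefore every induced $P_4$ contains both a vertex of $N_G(v)$ and a vertex of $V(G)\setminus(N_G(v)\cup\{v\})$. Letting $V^\ast$ be the set of vertices (other than $v$) lying in some induced $P_4$, it follows that \emph{both} $V^\ast\cap N_G(v)$ and $V^\ast\setminus N_G(v)$ are feasible integral solutions. The smaller of the two has size at most $|V^\ast|/2$. Now your own $0.2$-bound argument applies not just to vertices with positive LP value but to \emph{every} $w\in V^\ast$: the $P_4$ through $w$ has $x_v=0$ and two other entries each below $0.4$, forcing $x_w\geq 0.2$. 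Hence $z\geq 0.2\,|V^\ast|$, i.e.\ $|V^\ast|/2\leq 2.5\,z$, and Case~2 is done. No cograph decomposition of $G-v$ beyond ``all $P_4$'s go through $v$'' is used.
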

\begin{proof}
    We prove the statement by induction on the value of an optimal fractional solution to the linear program.

    First, consider as base case that $\LPorig$ has an optimal fractional solution of value $0$. Then this solution is the all-zero solution. This is also an integral optimum solution to the program, so the integrality gap of the program is $1$ and the claim holds.
    
    Next, let $\xorig = (x_u)_{u \in V(G)}$ be an optimal solution to $\LPorig$, let $\zorig = \sum_{u \in V(G)} x_u$ be its value and let $\Vlarge \subseteq V(G)$ be the set of vertices that are assigned a value of at least $0.4$ in this solution. We distinguish two cases.
    
    \subparagraph*{Case 1:} Suppose $\Vlarge \neq \emptyset$. Consider the pair $(G -  \Vlarge, v)$ and note that ${(G-\Vlarge) - v}$, being a subgraph of $G-v$, is a cograph. Also note that the restriction of~$\xorig$ to~$G - \Vlarge$ is a feasible solution to $\LPres$. This solution has a value of $\zorig - \sum_{u \in \Vlarge} x_u \leq \zorig - 0.4 \cdot |\Vlarge|$, which is strictly smaller than~$\zorig$ by the assumption that $\Vlarge \neq \emptyset$. If we let $\zres$ be the value of an optimal solution to $\LPres$, then this implies that $\zres \leq \zorig - 0.4 |\Vlarge| < \zorig$ as well.

    Then, by the induction hypothesis, an integral solution $\Vres$ to $\LPres$ with $|\Vres| \leq 2.5 \zres$ exists. To prove that $\LPorig$ has an integrality gap of at most~$2.5$, we proceed by showing that $\Vres \cup \Vlarge$ is an integral solution to $\LPorig$ with value at most $2.5 \zorig$. We start by arguing that $\Vres \cup \Vlarge$ is a valid integral solution.

    First note that neither $\Vres$ nor $\Vlarge$ contains $v$ since both $\LPres$ and $\LPorig$ require $x_v = 0$. Therefore, the union of these two sets also does not contain~$v$. Secondly, note that $\Vres$ (by construction of $\LPres$) contains a vertex from every induced~$P_4$ in~$G$ that does not already have a vertex in $\Vlarge$. As such, $\Vres \cup \Vlarge$ contains a vertex from every induced~$P_4$ in~$G$, which makes it a feasible solution to $\LPorig$. 
    
    Knowing this, it remains to prove that $\Vres \cup \Vlarge$ has size at most $2.5 \zorig$. Recall that we derived $\zres \leq \zorig - 0.4 \cdot |\Vlarge|$. We can use this inequality to make the following derivation:
    \begin{align*}
        |\Vres \cup \Vlarge| &\leq |\Vres| + |\Vlarge| 
                ~\leq~ 2.5 \zres + |\Vlarge| & \text{by definition of $\Vres$} \\
                & \leq 2.5 \left( \zorig - 0.4 \cdot |\Vlarge| \right) + |\Vlarge| & \text{by the above inequality} \\
                &= 2.5 \zorig - |\Vlarge| + |\Vlarge| ~=~ 2.5 \zorig & \text{since $2.5 \cdot 0.4 = 1$}
    \end{align*}

    \subparagraph*{Case 2:} Suppose $\Vlarge = \emptyset$. Let $V^\ast \subseteq V(G) \setminus \{v\}$ be the set of vertices that are part of at least one induced $P_4$ in $G$. To prove that $\LPorig$ has an integrality gap of at most $2.5$, we show that the smaller set of $V^\ast \cap N_G(v)$ and $V^\ast \setminus N_G(v)$ is a solution to the program with size at most $2.5 \zorig$.

    We start by proving that $V^\ast \cap N_G(v)$ and $V^\ast \setminus N_G(v)$ are both feasible solutions to $\LPorig$. We do so using an observation about the structure of the graph $P_4$. Observe that this graph has the property that each vertex has at least one neighbor and at least one non-neighbor. Since $v$ is part of every induced $P_4$ in $G$ by assumption, this means that every induced $P_4$ in $G$ contains both a neighbor and a non-neighbor of $v$.

    The above observation implies that $V^\ast \cap N_G(v)$ and $V^\ast \setminus N_G(v)$ both intersect all induced subgraphs of $G$ isomorphic to $P_4$. Hence, both of these sets are feasible solutions to $\LPorig$. It remains to prove that the smaller of the two sets has a size of at most $2.5 \zorig$. 
    
    Since $V^\ast \cap N_G(v)$ and $V^\ast \setminus N_G(v)$ form a partition of $V^\ast$ into two parts, the smaller of the two will always be of size at most $|V^\ast| / 2$. Therefore, it suffices to show that $|V^\ast| / 2 \leq 2.5 \zorig$. To prove this, we start by showing that the assumption that $\Vlarge = \emptyset$ implies that $x_w \geq 0.2$ for all vertices $w \in V^\ast$.

    We prove this property by contradiction, so suppose there is some vertex $w \in V(G) \setminus \{v\}$ that is part of an induced $P_4$, but which has $x_w < 0.2$. Let $H$ be an induced subgraph of $G$ that is isomorphic to $P_4$ and with $w \in V(H)$. Because $G - v$ is a cograph, $v$ is contained in every induced $P_4$ and in particular $v \in V(H)$. By definition of $\LPorig$, we have $x_v = 0$. By the assumption that $\Vlarge = \emptyset$, the two vertices in $V(H) \setminus \{w, v\}$ have value at most $0.4$, so $\sum_{u \in V(H)} x_u < 1$, which contradicts the validity of $\mathbf{x}$.

    Knowing that $x_w \geq 0.2$ for all $w \in V^\ast$, it follows that $\zorig = \sum_{u \in V(G)} x_u \geq 0.2 |V^\ast|$. Rewriting this inequality, we obtain $|V^\ast| / 2 \leq 2.5 \zorig$.
\end{proof}

At the moment, we are not aware of any examples of pairs $(G, v)$ where $G - v$ is a cograph and for which $\mathrm{LP_{CD}}(G, v)$ has an integrality gap of $2.5$. Therefore, the bound above does not have to be tight and the integrality gap of such programs may even be as small as $2$. However, there do exist pairs $(G, v)$ where $G - v$ is a cograph and for which $\mathrm{LP_{CD}}(G, v)$ has an integrality gap arbitrarily close to $2$.

Such a pair $(G, v)$ may be obtained by constructing~$G$ as the union of~$m$ disjoint edges and adding the vertex~$v$ to it which is adjacent to exactly one endpoint of each of these~$m$ edges. Then, any integral solution to $\mathrm{LP_{CD}}(G, v)$ must include, at least, one endpoint from all but one of the original~$m$ edges. Any such set of vertices is in fact a feasible integral solution, so a smallest integral solution has size $m-1$.

An optimal fractional solution may be obtained by assigning all~$m$ neighbors of~$v$ a value of~$0.5$, which yields a total value of $m/2$. Hence, the integrality gap of $\mathrm{LP_{CD}}(G, v)$ is $\frac{m-1}{m/2} = 2 \cdot \frac{m-1}{m}$, which can be arbitrarily close to $2$ for arbitrarily large $m$.

Like earlier, the upper bound on the integrality gap shown in~\cref{lem:cograph-deletion-integrality-gap} leads to the following algorithmic result.

\begin{theorem} \label{thm:cograph-essential-detection}
    There exists a polynomial-time algorithm for $3.5$-\textsc{Essential detection for Cograph Deletion}.
\end{theorem}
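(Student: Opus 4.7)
The plan is to apply \cref{thm:v-avoiding-lp} directly, using $c = 2.5$ so that $c + 1 = 3.5$, with $\Pi$ being Cograph Deletion. We have already observed that Cograph Deletion is a \textsc{Vertex Hitting Set} problem (with the hitting set family being $\mathcal{P}_4(G)$), so the framework applies. The theorem has two hypotheses to verify: a polynomial-time separation oracle for $\mathrm{LP_{CD}}(G, v)$, and an integrality gap of at most $2.5$ for this LP whenever $\{v\}$ is a valid single-vertex solution.

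For the separation oracle, the key observation is that $\mathrm{LP_{CD}}(G, v)$ has only polynomially many non-trivial constraints in total: one per induced $P_4$ in $G$, and there are at most $\binom{n}{4} = \Oh(n^4)$ four-element vertex subsets. I would simply enumerate all such subsets, check in constant time whether each induces a $P_4$, and for a candidate fractional assignment verify whether the corresponding inequality $\sum_{u \in V(H)} x_u \geq 1$ holds. Any violated inequality yields a separating hyperplane. In fact, since the LP is of polynomial size, one can solve it directly in polynomial time, which trivially provides a separation oracle as required.

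For the integrality gap condition, I would observe that $\{v\}$ is a feasible solution to the Cograph Deletion instance $G$ precisely when $G - v$ is a cograph, which is exactly the hypothesis of \cref{lem:cograph-deletion-integrality-gap}. That lemma therefore immediately gives the required integrality-gap bound of $2.5$ in the relevant case.

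With both hypotheses verified, \cref{thm:v-avoiding-lp} applied with $c = 2.5$ delivers the desired polynomial-time algorithm for $3.5$-\textsc{Essential detection for Cograph Deletion}. There is no real obstacle in this final step: all of the technical work has been carried out in \cref{lem:cograph-deletion-integrality-gap}, and the present theorem is essentially its packaging into the detection framework, analogous to how \cref{thm:undirected-multicut-essential-detection} and \cref{thm:dvm-detection-positive} were deduced from their respective integrality-gap lemmas.
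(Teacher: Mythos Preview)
Your proposal is correct and follows essentially the same route as the paper: verify the two hypotheses of \cref{thm:v-avoiding-lp} with $c=2.5$, using the brute-force enumeration of all $\Oh(n^4)$ four-vertex subsets for the separation oracle and invoking \cref{lem:cograph-deletion-integrality-gap} for the integrality-gap bound. The paper's proof is virtually identical, differing only in that it does not bother to remark that the LP is of polynomial size.
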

\begin{proof}
    As shown in \cref{lem:cograph-deletion-integrality-gap}, $\mathrm{LP_{CD}}(G, v)$ has an integrality gap of at most $2.5$ when $\{v\}$ is a solution to the \textsc{Cograph Deletion} instance $G$. Secondly, this LP has a very simple polynomial-time separation oracle in general: for every possible quadruplet of vertices in the graph, it can be checked in polynomial time whether they induce a $P_4$ and if so, whether their values sum to at least $1$. It then follows from \cref{thm:v-avoiding-lp} that there is a polynomial-time algorithm for $3.5$-\textsc{Essential detection for Cograph Deletion}.
\end{proof}

The algorithm to detect $3.5$-essential vertices leads to a search-space reduction guarantee for the current-best parameterized algorithm for \textsc{Cograph Deletion}~\cite{NastosG12} via Theorem 5.1 by Bumpus et al.~\cite{bumpus-essential}.

\begin{corollary} \label{cor:cograph:deletion:searchspace}
    There is an algorithm that, given a \textsc{Cograph Deletion} instance~$G$ on~$n$ vertices, outputs an optimal solution in time~$3.115^\ell \cdot n^{\Oh(1)}$, where~$\ell$ is the number of vertices in an optimal solution that are \emph{not} $3.5$-essential.
\end{corollary}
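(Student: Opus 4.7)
The plan is to obtain this corollary as an immediate consequence of two already-available ingredients: the polynomial-time algorithm for $3.5$-\textsc{Essential detection for Cograph Deletion} provided by Theorem~\ref{thm:cograph-essential-detection}, and the $3.115^k \cdot n^{\Oh(1)}$ parameterized algorithm of Nastos and Gao~\cite{NastosG12} for \textsc{Cograph Deletion}. Both are fed into the black-box dove-tailing scheme of Bumpus et al.~\cite[Theorem 5.1]{bumpus-essential}, whose output is precisely an algorithm whose exponential dependence is on the number of non-$c$-essential vertices in an optimal solution rather than on the total solution size~$k$.

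A quick preliminary I would verify is that the Bumpus et al.\ framework applies to \textsc{Cograph Deletion}. Since cographs form a hereditary graph class, \textsc{Cograph Deletion} is a $\mathcal{C}$-\textsc{Deletion} problem and therefore falls squarely within the scope for which Theorem~5.1 of~\cite{bumpus-essential} was originally stated; no appeal to the \textsc{Vertex Hitting Set} generalization that was invoked for Corollary~\ref{cor:multicut:searchspace} is necessary here.

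With these pieces in hand, the proof reduces to instantiating the framework with $c = 3.5$ and with the Nastos--Gao algorithm as the underlying parameterized solver. Substituting the base $3.115$ and the replacement parameter $\ell$ into the conclusion of the dove-tailing scheme yields the claimed running time $3.115^\ell \cdot n^{\Oh(1)}$. I do not foresee any real obstacle: every component of the argument is already established or cited, and the step that preserves the exponential base while swapping $k$ for $\ell$ is purely mechanical inside the framework. The write-up will therefore amount to little more than naming the two input algorithms and quoting Theorem~5.1 of Bumpus et al.
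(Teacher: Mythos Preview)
Your proposal is correct and matches the paper's approach exactly: the paper simply states that the corollary follows from combining the $3.5$-essential detection algorithm of Theorem~\ref{thm:cograph-essential-detection} with the Nastos--Gao $3.115^k \cdot n^{\Oh(1)}$ algorithm via Theorem~5.1 of Bumpus et al.~\cite{bumpus-essential}. Your additional remark that \textsc{Cograph Deletion} is a $\mathcal{C}$-\textsc{Deletion} problem for a hereditary class (so the original formulation of Theorem~5.1 already applies) is accurate and makes the justification slightly more explicit than the paper's one-line appeal.
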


\subsubsection{The standard Cograph Deletion problem} The result from \cref{lem:cograph-deletion-integrality-gap} is interesting because of its algorithmic implications, as presented in \cref{thm:cograph-essential-detection} and \cref{cor:cograph:deletion:searchspace}. Moreover, the result is even surprising because the bound on the integrality gap is lower than one may initially think. As mentioned at the start of \cref{ssec:cograph-deletion}, a trivial bound on the integrality gap of $\mathrm{LP}_{CD}(G, v)$ for instances with $G - v$ being a cograph is $3$, because $v$-\textsc{Avoiding Cograph Deletion} is a special case of $3$-\textsc{Hitting Set}. Polynomial-time solvable LPs for $3$-\textsc{Hitting Set} in general likely cannot even have an integrality gap smaller than $3$, as mentioned in the preliminaries. However, the specific setting of the $v$-\textsc{Avoiding Cograph Deletion} problem where $v$ is defined to be a singleton solution yields an LP whose integrality gap is bounded by $2.5$.

This remainder of this section is dedicated to explaining the apparent decrease in difficulty from the general $3$-\textsc{Hitting Set} to the setting of \cref{lem:cograph-deletion-integrality-gap}. In particular, we argue that this does not follow from a property that is intrinsic to the \textsc{Cograph Deletion} problem. Rather, the smaller integrality gap must therefore be attributed to the special setting we consider: the $v$-\textsc{Avoiding} variant of the problem for instances where $\{v\}$ is a solution.

To argue this, we provide a lower bound on the integrality gap of the standard LP of the standard \textsc{Cograph Deletion} problem. This LP captures the interpretation of the problem as special case of \textsc{Hitting Set} and may be obtained from the $v$-\textsc{Avoiding} LP by dropping the constraint $x_v = 0$. Because the problem is in particular a special case of $4$-\textsc{Hitting Set}, we find a trivial upper bound on the integrality gap of $4$. The lower bound we provide in this section shows that this bound of $4$ is tight, even for the special case of \textsc{Cograph Deletion}.

This result further motivates the strategy of investigating the $v$-\textsc{Avoiding} versions of problems where $\{v\}$ is a solution. In that sense, it also serves as a parallel to the duality between the positive results from Lemmas~\ref{lem:undirected-multicut-integrality-gap} and~\ref{lem:directed-multicut-integrality-gap} and the observation that neither of the standard LPs describing \textsc{Vertex Multicut} and \textsc{Directed Vertex Multicut} have constantly bounded integrality gaps.

We prove the lower bound on the standard LP of \textsc{Cograph Deletion} in two parts. First, in \cref{lem:exists-graph-with-many-P4s}, we show that for every sufficiently large $n$, there exists a graph with $n$ vertices in which no induced subgraph with $\log^2(n)$ vertices is a cograph. Then, in \cref{lem:graph-with-many-P4s-has-high-integrality-gap}, we show that the aforementioned standard \textsc{Cograph Deletion} LP attains an integrality gap that is arbitrarily close to $4$ on such graphs for arbitrarily large $n$. For the sake of notation, we restrict $n$ to be a power of $2$, to ensure that its logarithm is integer.

\begin{lemma} \label{lem:exists-graph-with-many-P4s}
    For every sufficiently large $n$ that is a power of $2$, there is a graph $G$ on $n$ vertices such that every $X \in \binom{V(G)}{\log^2(n)}$ induces a graph containing an induced $P_4$.
\end{lemma}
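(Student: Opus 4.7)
I would prove the lemma via the probabilistic method, applied to the Erd\H{o}s--R\'enyi random graph $G \sim G(n, 1/2)$. Setting $m := \log^2(n)$, the goal is to show that with strictly positive probability, no subset $X \in \binom{V(G)}{m}$ induces a cograph; since cographs are precisely the $P_4$-free graphs, this is equivalent to the desired property. Because $G$ is drawn uniformly from all labeled graphs on $n$ vertices, for any fixed $m$-set $X$ the induced subgraph $G[X]$ is uniformly distributed over the $2^{\binom{m}{2}}$ labeled graphs on $X$. Writing $c_m$ for the number of labeled cographs on $m$ vertices, we thus have $\Pr[\,G[X] \text{ is a cograph}\,] = c_m \cdot 2^{-\binom{m}{2}}$, and a union bound over the at most $\binom{n}{m} \leq n^m$ possible $X$ gives
\[
\Pr\bigl[\exists X \in \tbinom{V(G)}{m} : G[X] \text{ is a cograph}\bigr] \;\leq\; n^m \cdot \frac{c_m}{2^{\binom{m}{2}}}.
\]

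The key technical step is to bound $c_m$ by $2^{O(m \log m)}$, and this is where the main (mild) work lies. I would use the cotree characterization of cographs: every cograph on $m$ vertices is encoded by a rooted tree whose $m$ leaves carry the vertex labels and whose internal nodes (each of out-degree at least $2$) are labeled ``union'' or ``join''. The number of rooted tree shapes with $m$ leaves and internal degree $\geq 2$ grows only singly-exponentially in $m$ (the small Schr\"oder numbers), each of the at most $m-1$ internal nodes admits $2$ label choices, and the leaves can be labeled in $m!$ ways; this gives $c_m \leq m! \cdot 2^{O(m)} \leq 2^{O(m \log m)}$. A slicker variant of the counting step uses Seinsche's theorem (a cograph on $\geq 2$ vertices is disconnected or has disconnected complement) recursively, but the cotree bound is more than enough.

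Plugging $m = \log^2 n$ into the union bound, the logarithm (base $2$) of the right-hand side is
\[
m \log n \;+\; O(m \log m) \;-\; \tbinom{m}{2} \;=\; \log^3 n \;+\; O\!\bigl(\log^2 n \cdot \log\log n\bigr) \;-\; \tfrac{1}{2}\log^4 n \;+\; O(\log^2 n),
\]
which tends to $-\infty$ as $n \to \infty$. Hence for all sufficiently large $n$ (still restricted to powers of $2$, so that $\log n$ and $m$ are integers) the displayed probability is strictly less than $1$, so some realization $G$ of $G(n,1/2)$ has the property that every $m$-subset of its vertex set induces a graph containing an induced $P_4$. The main obstacle is simply the bookkeeping for $c_m$; once the $2^{O(m \log m)}$ bound is in hand, the union-bound calculation is routine because $\binom{m}{2}$ grows quadratically in $m$ while $\log n$ only contributes a factor $m \log n = m \cdot m^{1/2}$, which is swallowed comfortably by $\tfrac{1}{2}\binom{m}{2}$.
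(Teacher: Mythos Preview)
Your proposal is correct and follows essentially the same approach as the paper: probabilistic method on $G(n,1/2)$, union bound over all $\binom{n}{m}$ subsets with $m=\log^2 n$, and the estimate $c_m \leq m!\cdot 2^{O(m)} = 2^{O(m\log m)}$ for the number of labeled cographs (the paper cites this bound from the literature while you sketch it via cotrees, but the resulting calculation is identical). The final comparison $\log^3 n + O(\log^2 n \log\log n) - \tfrac{1}{2}\log^4 n \to -\infty$ is exactly the paper's conclusion.
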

\begin{proof}
    We prove the statement using the probabilistic method \cite{probabilistic-method}. That is, we show that a randomly generated graph has a strictly positive probability of having the desired property. This establishes the existence of graphs with this property in a non-constructive manner. In this case, the desired property for a graph is that every possible induced subgraph on $\log^2(n)$ vertices contains an induced $P_4$.

    Let $G$ therefore be an Erdős–Rényi random graph with $n$ vertices and edge probability $\frac12$. I.e.: it is generated by starting with $n$ vertices of degree $0$ and independently connecting every pair of two vertices with probability $\frac12$. Now, if we let
    \[
        p = \P\left( \exists X \in \binom{V(G)}{\log^2(n)} \text{ such that $G[X]$ is a cograph} \right), 
    \]
    then the probability of $G$ being a graph with the desired property is $1-p$. Hence, the proof continues by showing that $p < 1$. More strongly even, we show that $p \rightarrow 0$ as $n \rightarrow \infty$. We start by observing that the following bound holds:
    \begin{align*}
        p &= \P \left( \bigcup_{X \in \binom{V(G)}{\log^2(n)}} G[X] \text{ is a cograph} \right) \\
          &\leq \sum_{X \in \binom{V(G)}{\log^2(n)}} \P(G[X] \text{ is a cograph}) & \text{(by the union bound)}
    \end{align*}
    
    To compute the expression above, we compare the number of different possible graphs that can exist on $\log^2(n)$ vertices to the number of cographs on $\log^2(n)$ vertices. We express these quantities by counting the number of different labeled graphs and cographs. A labeled graph on some number $\ell$ vertices is a graph whose vertex set is $[\ell]$ and two labeled graphs are identical precisely when their vertex sets and edge sets are equal. Since all labeled graphs are equally likely to result from the random sampling, we can rewrite the bound from above as follows:
    \begin{align*}
        p &\leq \binom{n}{\log^2(n)} \frac{\#\text{labeled cographs on $\log^2(n)$ vertices}}{\#\text{labeled graphs on $\log^2(n)$ vertices}}
    \end{align*}
    
    We bound each of the three terms in this expression to eventually arrive at an upper bound for $p$. We start by bounding the binomial coefficient as follows:
    \[
    \binom{n}{\log^2(n)} \leq n^{\log^2(n)} = 2^{\log^3(n)}.
    \]

    Next, we compute the denominator of the fraction: the number of labeled graphs on $\log^2(n)$ vertices. In such a graph, there are $\binom{\log^2(n)}{2}$ pairs of vertices. A graph is uniquely determined by specifying which of these pairs are connected by an edge, for which there are~$2^{\binom{\log^2(n)}{2}}$ options. Hence, there are $2^{\binom{\log^2(n)}{2}} = 2^{\frac12\left( \log^4(n) - \log^2(n) \right)}$ labeled graphs on $\log^2(n)$ vertices.

    Finally, we bound the numerator of the fraction: the number of labeled cographs on $\log^2(n)$ vertices. It is known that the number of labeled cographs on $\ell$ vertices is at most~$\ell! \cdot 2^{\ell-1}$~\cite{cographs-speed-original}~(see also \cite{cographs-speed-english}). Because $\ell! \leq 2^{\ell \log(\ell)}$, this upper bound may be simplified to $2^{\ell + \ell \log(\ell)}$. For $\ell = \log^2(n)$, this is $2^{\log^2(n) + \log^2(n)\log(\log^2(n))}$.

    When we combine the bounds for each of the three different terms, we find that:
    \[
    p \leq 2^{\log^3(n)} \frac{2^{\log^2(n) + \log^2(n)\log(\log^2(n))}}{2^{\frac12\left( \log^4(n) - \log^2(n) \right)}} = \frac{2^{\Oh(\log^3(n))}}{2^{\Omega(\log^4(n))}}
    \]
    which tends to $0$ as $n \rightarrow \infty$.

    Not only does this show that there exists a graph on $n$ vertices in which every induced subgraph of size $\log^2(n)$ contains an induced $P_4$, it even shows that almost all sufficiently large graphs exhibit this property.
\end{proof}

We proceed by showing that the standard \textsc{Cograph Deletion} LP attains high integrality gaps on such graphs.
\begin{lemma} \label{lem:graph-with-many-P4s-has-high-integrality-gap}
    Let $G$ be a graph on $n$ vertices such that every $X \in \binom{V(G)}{\log^2(n)}$ induces a subgraph containing an induced $P_4$. Then the integrality gap of the standard \textsc{Cograph Deletion} LP on $G$ is at least $4 \frac{n - \log^2(n)}{n}$.
\end{lemma}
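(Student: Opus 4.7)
The plan is to exhibit a cheap fractional solution to the standard LP and contrast it with a lower bound on any feasible integral solution.

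For the fractional side, I would simply set $x_u := \tfrac14$ for every vertex $u \in V(G)$. Any induced subgraph isomorphic to $P_4$ contains exactly four vertices, so the corresponding constraint evaluates to $4 \cdot \tfrac14 = 1$ and is satisfied; the box constraints obviously hold. This yields a feasible fractional solution of total cost $n/4$, hence the fractional optimum is at most $n/4$.

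For the integral side, I would use the hypothesis on $G$ directly. If $S \subseteq V(G)$ is an integral solution to the standard \textsc{Cograph Deletion} LP (interpreted as a vertex subset in the canonical way), then the complement $V(G) \setminus S$ must induce a graph containing no $P_4$; otherwise a $P_4$-constraint would be violated. By assumption, every subset of $V(G)$ of size at least $\log^2(n)$ induces a subgraph containing an induced $P_4$. Therefore $|V(G) \setminus S| < \log^2(n)$, which gives $|S| > n - \log^2(n)$. Hence the optimal integral value is strictly greater than $n - \log^2(n)$.

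Combining the two bounds, the integrality gap is at least
\[
\frac{n - \log^2(n)}{n/4} = 4 \cdot \frac{n - \log^2(n)}{n},
\]
as claimed. There is no real obstacle here: the only subtlety is to recall that the universe of the \textsc{Hitting Set} instance encoding \textsc{Cograph Deletion} consists of vertex subsets of size exactly $4$, which is what makes the uniform assignment $x_u = 1/4$ simultaneously satisfy every constraint with equality, and to observe that feasibility of an integral solution is equivalent to its complement inducing a $P_4$-free (i.e.\ cograph) subgraph.
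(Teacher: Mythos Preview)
Your proposal is correct and follows essentially the same approach as the paper: exhibit the uniform fractional assignment $x_u = \tfrac14$ of cost $n/4$, and use the hypothesis to argue that any integral solution must omit fewer than $\log^2(n)$ vertices, hence has size at least $n - \log^2(n)$. The paper's proof is virtually identical, differing only in presentation order and in using the non-strict bound $|S| \geq n - \log^2(n)$ rather than your strict one.
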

\begin{proof}
    To prove a lower bound on the integrality gap of the \textsc{Cograph Deletion} LP, we give separate bounds for the optimal values of integral and fractional solutions on $G$. These are a lower bound for the value of an optimal integral solution and an upper bound for the value of an optimal fractional solution. We present the bounds in this order.

    First, we observe that every integral solution must have a value of at least $n - \log^2(n)$. After all, if there are at least $\log^2(n)$ vertices not in the solution, then by assumption, these vertices induce a subgraph that still contains an induced $P_4$.

    Secondly, we remark that setting $x_u = \frac14$ for every $u \in V(G)$ is a valid fractional solution to the LP. This assignment ensures that the sum of any four distinct variables is $1$, which suffices to satisfy the constraints of the LP. This solution has a value of $n / 4$, meaning that this value serves as an upper bound on the value of an optimal integral solution.

    Combining these two bounds, we find that the \textsc{Cograph Deletion} LP attains an integrality gap of at least $\frac{n - \log^2(n)}{n / 4} = 4 \frac{n - \log^2(n)}{n}$ on $G$.
\end{proof}

The following theorem now follows easily from the ingredients above.

\begin{theorem} \label{thm:standard:cograph:gap}
    For all $\eps > 0$, the integrality gap of the standard \textsc{Cograph Deletion} LP is larger than $4 - \eps$. 
\end{theorem}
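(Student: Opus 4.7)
The plan is to observe that \cref{lem:exists-graph-with-many-P4s} and \cref{lem:graph-with-many-P4s-has-high-integrality-gap} already carry almost all of the content, and that the theorem is obtained by choosing the parameter $n$ large enough so that the explicit lower bound $4\cdot\frac{n-\log^2(n)}{n}$ exceeds $4-\eps$.

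More concretely: given $\eps>0$, I would first choose $n$ to be a power of $2$ that is large enough to simultaneously satisfy the ``sufficiently large'' hypothesis of \cref{lem:exists-graph-with-many-P4s} and the inequality $\frac{4\log^2(n)}{n}<\eps$. The latter is always achievable because $\log^2(n)/n\to 0$ as $n\to\infty$, so the set of admissible $n$ is cofinite among powers of $2$. Then \cref{lem:exists-graph-with-many-P4s} supplies a concrete graph $G$ on $n$ vertices in which every subset of $\log^2(n)$ vertices induces a subgraph containing an induced $P_4$.

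Feeding this $G$ into \cref{lem:graph-with-many-P4s-has-high-integrality-gap} gives that the standard \textsc{Cograph Deletion} LP on $G$ has integrality gap at least
\[
4\cdot\frac{n-\log^2(n)}{n} \;=\; 4 - \frac{4\log^2(n)}{n} \;>\; 4 - \eps,
\]
by the choice of $n$. Since the theorem only asserts the existence of instances with gap larger than $4-\eps$, this single witness suffices.

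I do not anticipate a real obstacle: there are no case distinctions, no additional combinatorial arguments, and the probabilistic/counting work has already been done in the supporting lemmas. The only substantive step is the routine verification that $\log^2(n)/n$ can be made smaller than $\eps/4$ by taking $n$ large; this is immediate and requires no delicate bookkeeping.
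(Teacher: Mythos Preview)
Your proposal is correct and follows essentially the same approach as the paper: choose $n$ large enough that $4\frac{n-\log^2(n)}{n}>4-\eps$, invoke \cref{lem:exists-graph-with-many-P4s} to obtain the graph, and then apply \cref{lem:graph-with-many-P4s-has-high-integrality-gap}. The only cosmetic difference is that the paper frames this as a proof by contradiction, whereas you give a direct construction.
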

\begin{proof}
    We proceed by contradiction, so suppose that there is some $\eps > 0$ such that this integrality gap is at most $4 - \eps$. To arrive at a contradiction, we argue that a graph exists on which the LP attains an integrality gap greater than $4 - \eps$.

    First, let $n$ be a power of $2$ such that $4 \frac{n - \log^2(n)}{n} > 4 - \eps$. Since the left-hand side of this inequality approaches $4$ as $n$ grows to infinity, such $n$ exists. From \cref{lem:exists-graph-with-many-P4s}, we know that there exists a graph $G$ on $n$ vertices with the property that every $X \in \binom{V(G)}{\log^2(n)}$ induces a subgraph containing an induced $P_4$. Then, it follows from \cref{lem:graph-with-many-P4s-has-high-integrality-gap} that the LP attains an integrality gap of at least $4 \frac{n - \log^2(n)}{n}$ on $G$. By definition of $n$, this value is strictly greater than $4 - \eps$. This is a contradiction.
\end{proof}

\section{Hardness results}
\label{sec:hardness-results}

In this section, we show two main hardness results regarding essential detection algorithms. The first of these concerns \textsc{Directed Feedback Vertex Set} (DFVS). The objective in this problem is to find a smallest vertex set $S$ in a directed input graph $G$ such that $G - S$ is acyclic. We slightly abuse notation by using the acronym DFVS to denote both a (not necessarily optimal) solution to a given input and the name of the problem itself. Additionally, we let $\mathrm{DFVS}(G)$ denote the size of a smallest DFVS in $G$. The hardness result obtained for DFVS can be extended to \textsc{Directed Vertex Multicut}. The second result concerns \textsc{Vertex Cover} (VC) and it can be extended to other vertex hitting set problems on undirected graphs, including \textsc{Vertex Multicut}.

Our results are based on the hardness assumption posed by the Unique Games Conjecture (UGC) \cite{ugc-intro}. Although the conjecture has remained open since its introduction in 2002, many conditional hardness results in the area of approximation algorithms follow from it. Before stating our first new hardness result, we mention the known result it is derived from, which itself is an implication of the UGC. By the nature of the UGC, many results derived from it show the conditional hardness of distinguishing between two types of problem inputs: one with a very small solution and one with a very large solution. Indeed, we derive our hardness from one such result due to Svensson~\cite{ugc-dfvs-gap} that implies the following.

\begin{lemma}[{\cite[Theorem 1.1]{ugc-dfvs-gap}}]  \label{lem:dfvs-ugc-original}
    Assuming the UGC, the following problem is NP-hard for any integer $r\geq2$ and sufficiently small constant $\delta>0$. Given a directed graph on $n$ vertices, distinguish between the following two cases:
    \begin{enumerate}[(a)]
        \item There exists a partition of $V(G)$ into $V_0, V_1, \ldots, V_r$ such that: \label{itm:dfvs-ugc-original-a}
        \begin{enumerate}[(i)]
            \item $|V_i| \geq \frac{1 - \delta}{r} n$ for every $i = 1, \ldots, r$ and \label{itm:dfvs-ugc-original-a1}
            \item $G - V_0 - V_i$ is acyclic for every $i = 1, \ldots, r$. \label{itm:dfvs-ugc-original-a2}
        \end{enumerate}

        \item $\mathrm{DFVS}(G) \geq (1 - \delta)n$. \label{itm:dfvs-ugc-original-b}
    \end{enumerate}
\end{lemma}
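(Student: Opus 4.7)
Since the lemma is attributed to Svensson's Theorem~1.1, my plan is not to reinvent the reduction but to derive the precise partition structure demanded by item~(a) from Svensson's UGC-based hardness construction for \textsc{DFVS}. The approach is to reduce from a Unique Games instance over alphabet $[r]$ enjoying a $(1-\delta',\delta')$ completeness--soundness gap, and apply a long-code-style construction in which each variable of the Unique Games instance is expanded into a block of $\Oh(r)$ vertices indexed by labels, while directed edges between blocks encode the permutation constraints so that the only directed cycles created correspond to sequences of labels that jointly fail to agree with any single consistent assignment.

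For completeness (case~(a)), given a labeling $\sigma$ satisfying a $(1-\delta')$ fraction of constraints, I would define $V_i$ for $i = 1, \ldots, r$ to be the set of vertices representing label $i$ in their block under $\sigma$, and place into $V_0$ all vertices from the small set of blocks whose constraints are violated by $\sigma$ (together with any structural exception vertices the construction requires). Balancedness of the blocks then yields $|V_i| \geq \frac{1-\delta}{r}\,n$ for a suitable $\delta$ depending on $\delta'$, and the key observation is that deleting $V_0$ together with any single $V_i$ kills every remaining cycle: on the consistent blocks, the edges are set up so that any cycle must cross labels in an inconsistent way, which is impossible once the entire label class $V_i$ is removed, while $V_0$ already cuts the cycles that pass through unsatisfied constraints. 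Hence $G - V_0 - V_i$ is acyclic.

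For soundness (case~(b)), I would invoke the standard UGC decoding argument: if $\mathrm{DFVS}(G) < (1-\delta)n$, then more than $\delta n$ vertices survive the deletion, and since the blocks partition the vertex set roughly evenly, a constant fraction of the blocks retains enough surviving vertices to support a (possibly randomized) rounding that assigns each variable a label from among the surviving candidates. A direct calculation shows the resulting labeling satisfies substantially more than a $\delta'$ fraction of the Unique Games constraints, contradicting the NO case of the UGC instance.

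The main technical obstacle is parameter calibration: the same $\delta$ must appear in both (a.i) and (b), and one must choose $r$, the block size, and the Unique Games gap $\delta'$ so that the block-size lower bound, the acyclicity property, and the soundness decoding all go through simultaneously. Since this bookkeeping is already carried out by Svensson, what remains in our setting is only to read off from his construction that the YES instances explicitly admit the claimed partition into $V_0,V_1,\dots,V_r$, which is immediate once each $V_i$ is identified with the $i$-th label class of the satisfying assignment.
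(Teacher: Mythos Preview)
The paper does not prove this lemma at all: it is quoted verbatim as Theorem~1.1 of Svensson's work and used as a black box. No reduction from Unique Games, no long-code construction, and no parameter calibration appears in the paper for this statement; the only work the paper does is the short derivation in the \emph{next} lemma, where case~(a) is weakened to the single inequality $\mathrm{DFVS}(G)\le(\frac{1-\delta}{r}+\delta)n$ by observing that $V_0\cup V_i$ is a feedback vertex set and bounding its size.

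Your proposal is therefore not wrong in spirit---you are sketching, at a high level, the shape of Svensson's argument---but it is not what is asked for here, and it is also not a proof. Phrases like ``long-code-style construction,'' ``the edges are set up so that any cycle must cross labels in an inconsistent way,'' and ``a direct calculation shows'' are placeholders for the actual content of Svensson's reduction; none of the claimed properties (acyclicity of $G-V_0-V_i$, the soundness decoding, the simultaneous parameter choice) are verified. If the goal were to reproduce Svensson's theorem, substantially more detail would be required. For the present paper, the correct move is simply to cite the result and move on.
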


Note that case (a) implies that a small DFVS exists in $G$, as we derive below. Although the original statement is slightly stronger than that, the existence of this small DFVS is sufficient for our application. As such, we rephrase the result as follows to highlight the distinction between one case having a small DFVS, while the other case only contains large DFVSs.

\begin{lemma} \label{lem:dfvs-ugc}
    Assuming the UGC, the following problem is NP-hard for any integer $r\geq2$ and sufficiently small constant~$\delta > 0$. Given a directed $n$-vertex graph $G$, distinguish between the following two cases:
    \begin{itemize}
        \item $\mathrm{DFVS}(G) \leq \left( \frac{1-\delta}{r} + \delta \right) n$
        \item $\mathrm{DFVS}(G) \geq (1 - \delta) n$
    \end{itemize}
\end{lemma}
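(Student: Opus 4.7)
The plan is to derive the restated hardness result directly from Lemma~\ref{lem:dfvs-ugc-original} by a short combinatorial argument: case~(b) in the restated lemma is verbatim case~\ref{itm:dfvs-ugc-original-b} in the original, so the entire task is to argue that case~\ref{itm:dfvs-ugc-original-a} of Lemma~\ref{lem:dfvs-ugc-original} implies $\mathrm{DFVS}(G) \leq \bigl(\tfrac{1-\delta}{r} + \delta\bigr)n$. Once this implication is established, the NP-hardness transfers because any algorithm that distinguishes the two bulleted cases of Lemma~\ref{lem:dfvs-ugc} also distinguishes the two cases of Lemma~\ref{lem:dfvs-ugc-original}.

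First I would extract an upper bound on $|V_0|$. Since $V_0, V_1, \dots, V_r$ partition $V(G)$ and property~\ref{itm:dfvs-ugc-original-a1} guarantees $|V_i| \geq \tfrac{1-\delta}{r} n$ for every $i \in \{1, \dots, r\}$, summing these $r$ lower bounds yields $\sum_{i=1}^{r} |V_i| \geq (1-\delta)n$, and therefore $|V_0| \leq \delta n$.

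Second, property~\ref{itm:dfvs-ugc-original-a2} states that $G - V_0 - V_i$ is acyclic for every $i \in \{1, \dots, r\}$, which means $V_0 \cup V_i$ is itself a directed feedback vertex set of $G$ of size $|V_0| + |V_i|$. Taking the smallest such index, a standard averaging argument over the disjoint sets $V_1, \dots, V_r$ gives
\[
\min_{1 \leq i \leq r} |V_i| \;\leq\; \frac{1}{r}\sum_{i=1}^{r} |V_i| \;=\; \frac{n - |V_0|}{r}.
\]
Combining these two steps,
\[
\mathrm{DFVS}(G) \;\leq\; |V_0| + \frac{n - |V_0|}{r} \;=\; \Bigl(1 - \tfrac{1}{r}\Bigr)|V_0| + \frac{n}{r}.
\]
Since the coefficient $1 - 1/r$ is nonnegative, this expression is maximized when $|V_0|$ is as large as possible. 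Substituting the worst case $|V_0| \leq \delta n$ gives
\[
\mathrm{DFVS}(G) \;\leq\; \Bigl(1 - \tfrac{1}{r}\Bigr)\delta n + \tfrac{n}{r} \;=\; \Bigl(\tfrac{1-\delta}{r} + \delta\Bigr) n,
\]
which is exactly the first bullet of the statement.

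There is no real obstacle: the argument is entirely elementary, and the only thing to be careful about is that the averaging step does not lose anything beyond the already-prepared slack $\delta n$. No further reduction or gadgetry is needed, since the restated problem is merely a (slightly weaker) specialization of the one whose hardness Lemma~\ref{lem:dfvs-ugc-original} already provides.
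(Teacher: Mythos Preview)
Your proposal is correct and follows essentially the same approach as the paper: both arguments show that case~\ref{itm:dfvs-ugc-original-a} of Lemma~\ref{lem:dfvs-ugc-original} yields a DFVS of the form $V_0 \cup V_i$ of size at most $\bigl(\tfrac{1-\delta}{r}+\delta\bigr)n$, after which the hardness transfers immediately. The only cosmetic difference is that the paper bounds $|V_0 \cup V_i|$ directly as $n - \sum_{j \neq i}|V_j| \leq n - (r-1)\tfrac{1-\delta}{r}n$ for \emph{any} $i$, so the averaging step you take to select the smallest $V_i$ is not actually needed.
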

\begin{proof}
    To prove that this reformulation of \cref{lem:dfvs-ugc-original} also holds, we show that its case~(\ref{itm:dfvs-ugc-original-a}) implies the existence of a DFVS in $G$ of size at most $\left( \frac{1-\delta}{r} + \delta \right) n$. To this end, let $V_0, V_1, \ldots, V_r$ be a partition of $V(G)$ as described in case (\ref{itm:dfvs-ugc-original-a}). Then, for any $i \in [r]$, $V_0 \cup V_i$ is a DFVS by property~(ii) and
    \begin{align*}
    |V_0 \cup V_i|  &= n - \left| \bigcup_{j \in [r] \setminus \{i\}} V_j \right| 
                    ~\leq~ n - (r-1) \cdot \frac{1 - \delta}{r}n 
                    ~=~ n - \left( (1- \delta) - \frac{1-\delta}{r}\right) n \\
                    &= \left( 1 - 1 + \delta + \frac{1-\delta}{r} \right) n 
                    ~=~ \left( \frac{1-\delta}{r} + \delta \right) n.
    \end{align*}
    The inequality in the derivation above follows from property~(i).
\end{proof}
We use this formulation to prove the first hardness results.

\newcommand{\Qin}{Q_\mathrm{in}}
\newcommand{\Qout}{Q_\mathrm{out}}
\begin{theorem} \label{thm:dfvs-detection-hard}
    Assuming the UGC, $(2-\eps)$-\textsc{Essential detection for DFVS} is NP-hard for any $\eps \in (0, 1]$.
\end{theorem}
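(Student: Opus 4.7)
The plan is to use any polynomial-time $(2-\eps)$-essential detection algorithm $\mathcal{A}$ for DFVS to build a polynomial-time distinguisher for the UGC-hard gap problem from \cref{lem:dfvs-ugc}, thereby contradicting the UGC. I would instantiate that lemma with $r=2$ and a sufficiently small $\delta>0$ so that the ratio $B/A=\tfrac{2(1-\delta)}{1+\delta}$, where $A=(\tfrac{1-\delta}{2}+\delta)n$ and $B=(1-\delta)n$, strictly exceeds $2-\eps$. Since this ratio tends to $2$ as $\delta \to 0$, the required $\delta$ exists for every $\eps>0$.

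The reduction takes a gap instance $G$ and in polynomial time builds $G':=G\sqcup H$, where the gadget $H$ consists of a fresh vertex $v$ together with $m$ internally-disjoint directed triangles sharing only~$v$: for each $i\in[m]$ there are fresh vertices $u_i,w_i$ and arcs $v\to u_i\to w_i\to v$. Since $G$ and $H$ are vertex-disjoint, $\mathrm{DFVS}(G')=\mathrm{DFVS}(G)+1$ (by taking $v$ together with a minimum DFVS of~$G$), while any DFVS of $G'$ avoiding $v$ must hit every triangle of $H$ and hence has size at least $\mathrm{DFVS}(G)+m$. Therefore $v$ is $(2-\eps)$-essential in $G'$ if and only if $m>(2-\eps)+(1-\eps)\,\mathrm{DFVS}(G)$. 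Picking any integer $m$ in the half-open interval between $(2-\eps)+(1-\eps)A$ and $(2-\eps)+(1-\eps)B$ -- of length $(1-\eps)(B-A)=\Omega(n)$ for $\eps<1$, hence containing integers -- ensures that $v$ is $(2-\eps)$-essential in $G'$ in precisely the YES case of the gap problem; the boundary case $\eps=1$ can be handled by first reducing to a slightly smaller $\eps'\in(0,1)$ through a padding of disjoint $2$-cycles.

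To decide using $\mathcal{A}$ whether $v$ is $(2-\eps)$-essential in $G'$, I would iterate $k$ over $\{1,2,\dots,A+1\}$, and for each $k$ call $\mathcal{A}(G',k)$, then iteratively extend the returned set $S$ by repeated calls $\mathcal{A}(G'\setminus S,\,k-|S|)$ until either $G'\setminus S$ becomes acyclic or the set $S$ no longer grows. The distinguisher outputs YES as soon as, for some $k$ in the tested range, this procedure produces a DFVS of $G'$ of size at most $k$ that contains $v$; otherwise it outputs NO. In the YES case, for $k^{*}=\mathrm{DFVS}(G')\le A+1$, both guarantees G\ref{g1} and G\ref{g2} apply at the first call, so G\ref{g2} forces $v\in S$ (since $v$ is $(2-\eps)$-essential) and G\ref{g1} keeps $|S|\le k^{*}$ with $S$ inside one fixed optimum. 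In the NO case, $\mathrm{DFVS}(G')\ge B+1>A+1\ge k$, so no DFVS of $G'$ of size at most $k$ exists, and the verification step can never succeed no matter what $\mathcal{A}$ returns under the then-vacuous preconditions.

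The main obstacle will be ruling out that the iterative completion stalls in the YES case on a residual instance $G'\setminus S$ that is still cyclic but contains no $(2-\eps)$-essential vertex, where the output of $\mathcal{A}$ may legitimately be empty. The gadget partially guards against this, because $v$ is forced into $S$ already at the first call, after which the gadget portion of the residual is acyclic and only the $G$-part still needs to be covered; each subsequent call then reduces to a detection problem on (a subgraph of) $G$ with a budget matching its residual DFVS number. My plan is to argue that along this recursion $S$ remains inside a fixed optimum $O$ of $G'$ by G\ref{g1}, so whenever the iteration stalls the residual $G'\setminus S$ must already be acyclic; showing that an empty output of $\mathcal{A}$ forces $G'\setminus S$ to be acyclic uses the additional fact that an empty $S$ is a subset of \emph{every} optimum, combined with a careful case analysis of what can happen at intermediate residuals. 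Making this invariant formally precise, and bounding the total number of iterations polynomially, will be the technical heart of the proof.
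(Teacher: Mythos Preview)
Your reduction idea and the calibration of $m$ are sound, but the extraction step has a genuine gap that you yourself flag and do not resolve. When the iteration reaches a residual $G'\setminus S$ that is still cyclic but contains no $(2-\eps)$-essential vertex, the empty set satisfies both G\ref{g1} and G\ref{g2}, so $\mathcal{A}$ is allowed to return $\emptyset$ and the loop stalls. Your proposed fix (``an empty $S$ is a subset of every optimum'') does not help: that fact gives no leverage to force the residual to be acyclic. Concretely, after the first call removes $v$ and the gadget becomes acyclic, the residual is essentially a subgraph of $G$, and generic DFVS instances can have \emph{no} $(2-\eps)$-essential vertex at all (e.g.\ two disjoint directed triangles). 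So there is no reason the iteration ever completes to a DFVS of size at most $k^*$ in the YES case, and your distinguisher may output NO on YES instances.

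The paper avoids this obstacle entirely by a different construction that needs only a \emph{single} call to $\mathcal{A}$, with no iteration. It builds $G'$ so that in the large-DFVS case, a specific set $P$ of size exactly $n$ is simultaneously the unique optimum \emph{and} every vertex of $P$ is $(2-\eps)$-essential; hence with $k=n$, G\ref{g2} forces $S\supseteq P$ while G\ref{g1} forces $S\subseteq P$, giving $|S|=n$. In the small-DFVS case one shows $\mathrm{DFVS}(G')<n$, so G\ref{g1} alone gives $|S|<n$. Checking whether $|S|=n$ then distinguishes the cases. The gadget that achieves this attaches, to each vertex $p$ of $G$, many triangles through $p$ shared with a common pool of auxiliary vertices, so that avoiding any single $p$ forces the solution to pick up roughly $n$ extra auxiliary vertices. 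The key conceptual difference from your attempt is that the paper engineers the instance so that the essential set \emph{is} the entire optimum, which sidesteps the need to complete a partial solution.
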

\begin{proof}
    Let $\eps \in (0,1]$ be given. We can assume w.l.o.g. that~$\frac{2}{\eps}$ is integral. If not, we could consider some~$\eps' < \eps$ such that~$\frac{2}{\eps'}$ is integer and prove hardness for $(2 - \eps')$-essential detection. As a $(2 - \eps)$-essential detection algorithm is also a valid algorithm for $(2 - \eps')$-essential detection, this would imply the hardness of $(2 - \eps)$-essential detection as well.
    
    Now, we use \cref{lem:dfvs-ugc} as a starting point for hardness. To do so, let $G$ be an arbitrary directed graph on $n$ vertices. To use \cref{lem:dfvs-ugc}, we show how to reduce $G$ into a directed graph $G'$, such that solving $(2 - \eps)$-\textsc{Essential detection for DFVS} on $G'$ allows us to distinguish between $\mathrm{DFVS}(G) \leq \left( \frac{1 - \delta}{r} + \delta \right)n$ and $\mathrm{DFVS}(G) \geq (1 - \delta) n$ for some integer $r \geq 2$ and arbitrarily small $\delta > 0$. We assume w.l.o.g. that $n \cdot \eps / 2$ is integer. If not, we could consider the graph obtained by having $2/\eps$ independent copies of $G$ instead, as the minimum size of a DFVS relative to the total graph size would be the same. We proceed by explaining the reduction, after which we prove its correctness.

    Our reduction starts with the directed graph~$G$ and depends on the value of $\eps$. See also \cref{fig:dfvs-reduction} for a visual example. We start the construction of~$G'$ as a copy of~$G$. To avoid confusion between vertices in~$G$ and~$G'$, we denote the current vertex set of~$G'$ as~$P$. Next, we expand the graph with two additional sets of vertices~$\Qin$ and~$\Qout$. These sets each consist of $m := (1 - \frac{\eps}{2})n$ vertices, which is integer by our assumptions on~$n$ and~$\eps$. We denote the vertices of $\Qin$ as $q_1, \ldots, q_m$ and the vertices of $\Qout$ as $q_1', \ldots, q_m'$. We define~$Q := \Qin \cup \Qout$.
    
    We complete the construction of $G'$ by adding more arcs to it. For every $i \in [m]$, we add the arc $(q_i, q_i')$. For every $p \in P$ and $q_i \in \Qin$, we add the arc $(p, q_i)$. For every $p \in P$ and $q_i' \in \Qout$, we add the arc $(q_i', p)$. This completes the construction of $G'$. Observe that it ensures that $(p, q_i, q_i')$ is a directed cycle for every $p \in P$ and $i \in [m]$.

    \begin{figure}[t]
        \centering
        \begin{subfigure}[t]{0.48\textwidth}
            \centering
            \includegraphics[width=\textwidth]{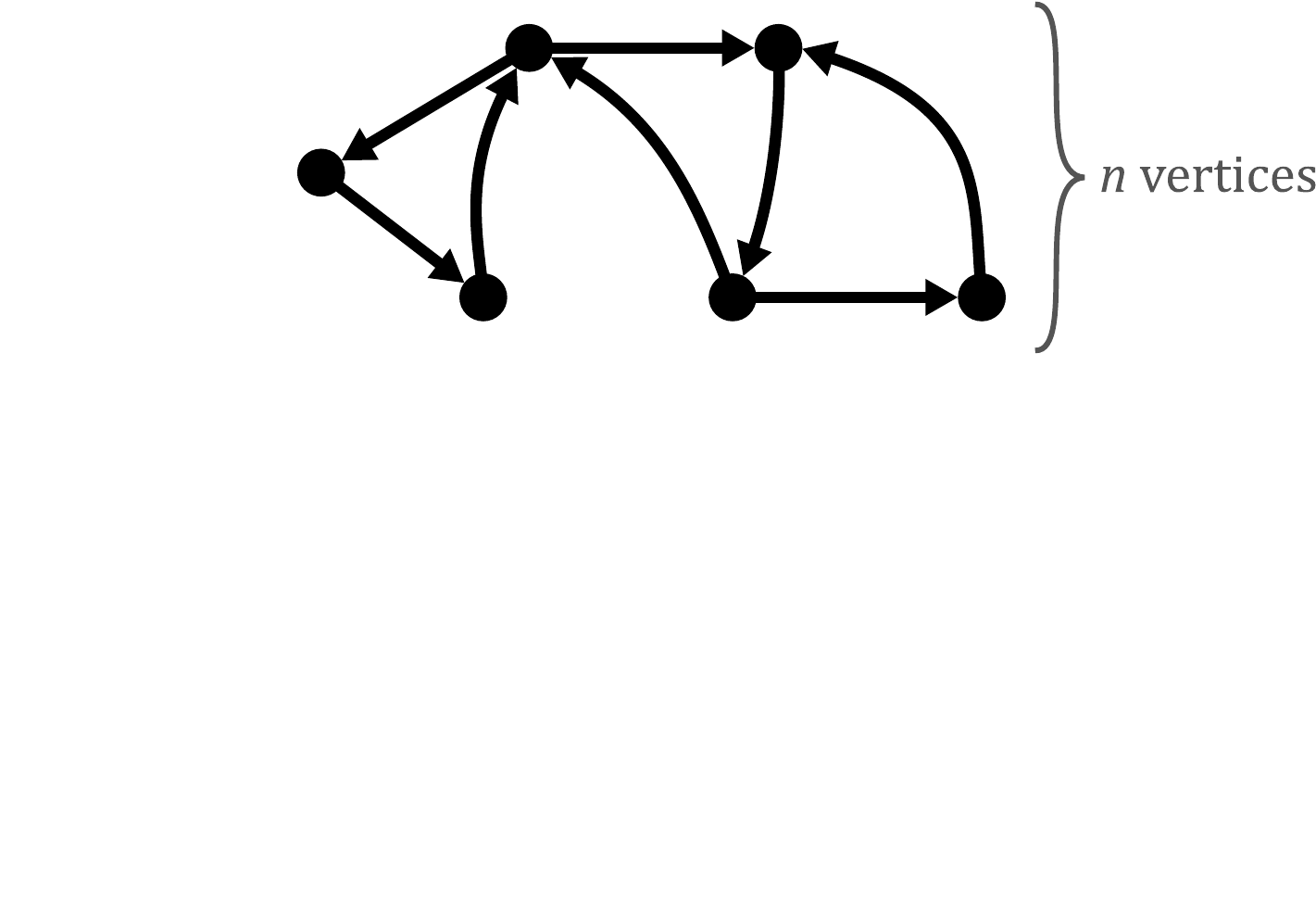}
            \caption{Example input graph $G$ with $n$ vertices.}
        \end{subfigure}
        \hfill
        \begin{subfigure}[t]{0.48\textwidth}
            \centering
            \includegraphics[width=\textwidth]{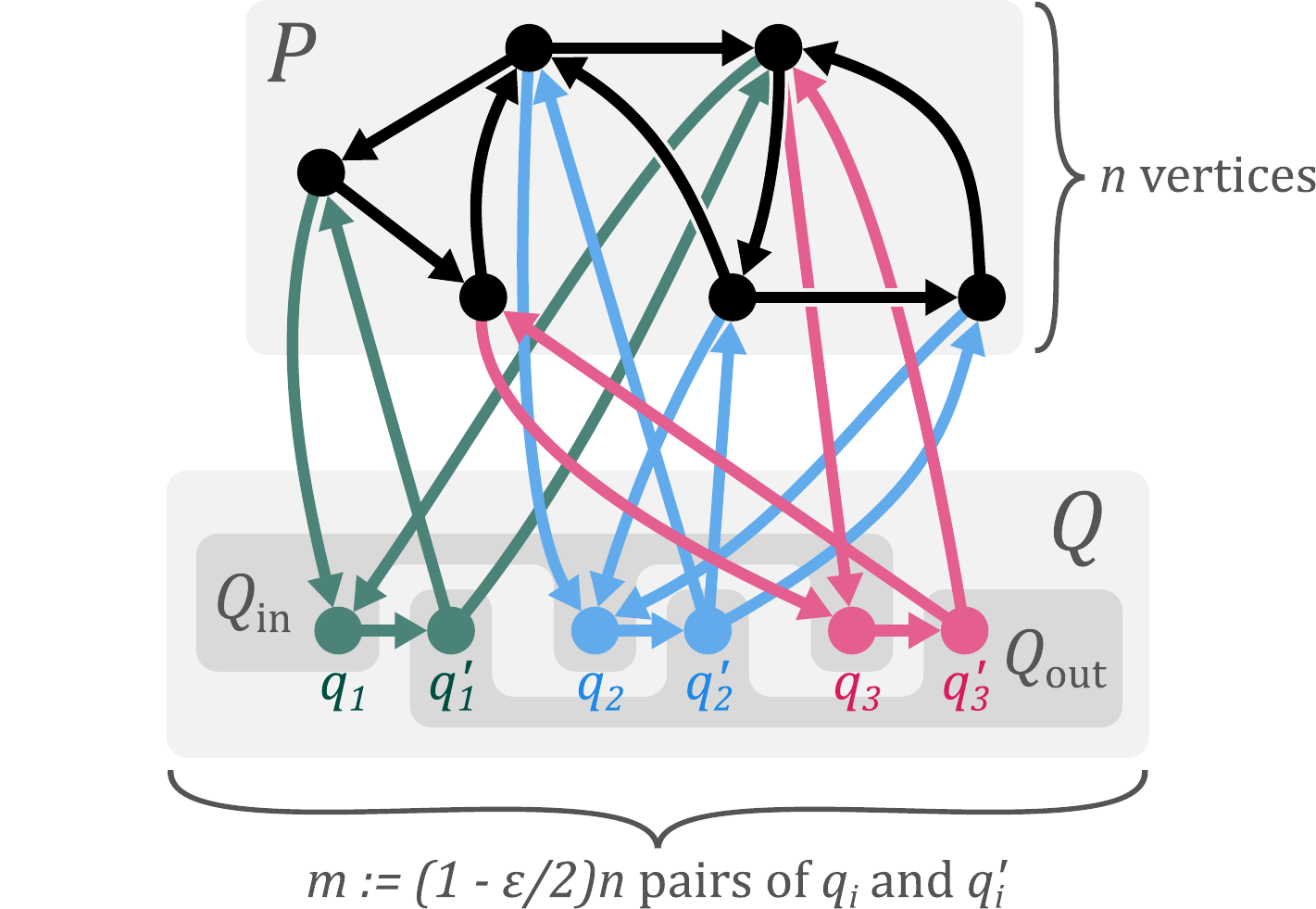}
            \caption{The resulting graph $G'$ when performing the reduction on $G$ with $\eps = 1$.}
        \end{subfigure}
        \caption{An example of the reduction for DFVS. Note: for the sake of visualization, not all arcs are shown in the drawing of $G'$. Every $q_i \in \Qin$ should have incoming arcs from all $u \in P$ and every $q_i' \in \Qout$ should have outgoing arcs to all $u \in P$. The colors serve only to improve readability: they distinguish which arcs are incident to which $(q_i, q_i')$ pair.}
        \label{fig:dfvs-reduction}
    \end{figure}

    To prove the correctness of this reduction, we show that the output of an algorithm for $(2 - \eps)$-\textsc{Essential detection for DFVS} on $G'$ can be used as subroutine to distinguish between  $\mathrm{DFVS}(G) \leq \left( \frac{1 - \delta}{r} + \delta \right)n$ and $\mathrm{DFVS}(G) \geq (1 - \delta) n$ for some integer~$r \geq 2$ and arbitrarily small~$\delta > 0$. In particular, we show that this is possible for $r = \frac{4}{\eps}$, which is integer by the assumption that $\frac{2}{\eps}$ is integer. From now on, we fix~$r = \frac{4}{\eps}$ and~$\delta > 0$ to be arbitrarily small so that $\delta \leq \frac{\eps}{4}$ in particular.

    Now, suppose that an algorithm for $(2 - \eps)$-\textsc{Essential detection for DFVS} exists and let $S \subseteq V(G')$ be its output when run on $G'$ with $k$ set to $n$. (Recall, $k$ represents a guess for (an upper bound) of the size of an optimal solution in $G'$. In this setting, that would be a guess for the size of a minimum size DFVS in $G'$.) We show that the following two claims hold.
    \begin{claim} \label{clm:dfvs-small}
        If $\mathrm{DFVS}(G) \leq \left( \frac{1 - \delta}{r} + \delta \right)n$, then $|S| < n$.
    \end{claim}
    \begin{claim} \label{clm:dfvs-large}
        If $\mathrm{DFVS}(G) \geq (1 - \delta) n$, then $|S| = n$.
    \end{claim}
    Then, simply checking the size of the output set $S$ suffices to distinguish between $\mathrm{DFVS}(G) \leq \left( \frac{1 - \delta}{r} + \delta \right)n$ and $\mathrm{DFVS}(G) \geq (1 - \delta) n$. From \cref{lem:dfvs-ugc}, we know that this distinction is NP-hard to make under the UGC, meaning that $(2 - \eps)$-\textsc{Essential detection for DFVS} is also NP-hard when assuming the UGC. To prove \cref{thm:dfvs-detection-hard}, it therefore suffices to prove \cref{clm:dfvs-small} and \cref{clm:dfvs-large}.
    \begin{claimproof}[Proof of \cref{clm:dfvs-small}]
        Suppose that $\mathrm{DFVS}(G) \leq \left( \frac{1 - \delta}{r} + \delta \right)n$. We show that $G'$ has a DFVS whose size is strictly smaller than $n$. Then, by property~(G\ref{g1}), we know that $S$ must be a subset of some smallest DFVS in $G'$. This implies that $|S| < n$ and hence proves the claim. Therefore, it suffices and remains to prove that $G'$ has a DFVS whose size is strictly smaller than $n$.

        Let $X$ be a smallest DFVS in $G$. We show that $X \cup \Qin$ is a DFVS in $G'$ and that its size is smaller than $n$. We start by showing that $X \cup \Qin$ is a DFVS in $G'$, i.e.: we show that $G' - X - \Qin$ is acyclic. We proceed by contradiction, so suppose there is some cycle $C$ in this graph. First note that $C$ does not contain any vertices from $\Qout$, since these vertices only had incoming arcs from $\Qin$ in $G'$. Hence, in the graph $G' - X - \Qin$, the vertices from $\Qout$ only have outgoing arcs, which means that they are not part of any cycles and in particular not of $C$. It follows that $C$ must live in $G'[P] - X$. But because $G'[P] = G$, $C$ is also a cycle in $G - X$, which contradicts the assumption that $X$ is a DFVS in $G$. Now, it remains to show that $|X \cup \Qin| < n$, which is achieved through the following derivation.
        \begin{align*}
            |X \cup \Qin| &= |X| + |\Qin| 
                      = \mathrm{DFVS}(G) + \left( 1 - \eps/2 \right) n \\
                      &\leq \left( \frac{1 - \delta}{r} + \delta \right)n + \left( 1 - \eps/2 \right) n \\
                      &< \left( \frac{1}{r} + \delta \right)n + (1 - \eps / 2)n & \text{since $\delta > 0$}\\
                      &\leq \left( \frac{1}{4 / \eps} + \eps/4 + 1 - \eps/2 \right) n = n & \text{since $r = 4/\eps$ and $\delta \leq \eps / 4$ \claimqedhere}
        \end{align*}
    \end{claimproof}

    \begin{claimproof}[Proof of \cref{clm:dfvs-large}]
        Suppose that $\mathrm{DFVS}(G) \geq (1 - \delta) n$. We show that the following two properties hold:
        \begin{enumerate}[(a)]
            \item $P$ is a DFVS in $G'$. \label{itm:large-dfvs-a}
            \item All vertices in $P$ are $(2 - \eps)$-essential in $G'$. \label{itm:large-dfvs-b}
        \end{enumerate}
        Before showing that these two properties hold, we argue why doing so would suffice to prove the claim. First note that $P$ is the unique smallest DFVS in $G'$ if both properties hold: $P$ is a DFVS by property~(\ref{itm:large-dfvs-a}) and any solution that does not contain every vertex of $P$ must be at least $(2 - \eps)$ times as large by property~(\ref{itm:large-dfvs-b}). Hence, $\mathrm{DFVS}(G') = |P| = n$. Then by property~(G\ref{g2}), the set~$S$ (the output of running a $(2 - \eps)$-\textsc{Essential detection for DFVS} algorithm on $G'$ with $k$ set to $n$) must contain all of $P$, so $|S| \geq n$. By property~(G\ref{g1}), no other vertices can be in $S$, so $|S| = n$.
    
        We proceed by showing properties~(\ref{itm:large-dfvs-a}) and~(\ref{itm:large-dfvs-b}) hold. It is easy to see that $P$ is a DFVS in $G'$, since $G' - P$ consists only of pairwise disjoint arcs and is therefore acyclic. This shows that property~(\ref{itm:large-dfvs-a}) holds.
    
        To show that property~(\ref{itm:large-dfvs-b}) holds, consider an arbitrary $u \in P$ and let $X'$ be a DFVS in $G'$ that does not contain $u$. We show that $|X'| > (2 - \eps) n$, meaning that $u$ is $(2 - \eps)$-essential. As $u \in P$ was chosen arbitrarily, all vertices in $P$ are $(2 - \eps)$-essential. 
        
        To prove that $|X'| > (2 - \eps) n$, we provide lower bounds for both $|X' \cap P|$ and $|X' \cap Q|$. First, we bound $|X' \cap P|$. Since $G'[P] = G$, we must have $|X' \cap P| \geq \mathrm{DFVS}(G)$, which is at least $(1 - \delta)n$ by assumption.
        
        Secondly, we prove that $|X' \cap Q| \geq m$. We proceed by contradiction, so suppose otherwise. Then, there must be at least one $i \in [m]$ such that both $q_i,q_i' \notin X'$. Since $u \notin X'$ by definition of $X'$, $(u, q_i, q_i')$ is a cycle that lives in $G' - X'$. This contradicts the assumption that $X'$ is a DFVS in $G'$. Therefore, it follows that $|X' \cap Q| \geq m = (1 - \frac{\eps}{2})n$. Now, we can make the following derivation.
        \begin{align*}
            |X'| &= |X' \cap P| + |X' \cap Q| \geq (1 - \delta)n + (1 - \eps/2)n \\
                 &\geq (1 - \eps/4 + 1 - \eps/2)n > (2 - \eps) n & \text{since $\delta \leq \eps/4$}
        \end{align*}
        This shows that property~(\ref{itm:large-dfvs-b}) holds, which, together with property~(\ref{itm:large-dfvs-a}), proves the claim.
    \end{claimproof}

    This concludes the proof of \cref{thm:dfvs-detection-hard}.
\end{proof}

The lower bound of \cref{thm:dfvs-detection-hard} yields an analogous lower bound for \textsc{Directed Vertex Multicut}, since the set of solutions for \textsc{Directed Feedback Vertex Set} on a graph~$G$ equals the set of solutions to the \textsc{Directed Vertex Multicut} instance obtained from~$G$ by introducing a terminal pair~$(u_2, u_1)$ for every arc~$(u_1, u_2)$ of~$G$.

\begin{corollary}
    Assuming the UGC, $(2 - \eps)$-\textsc{Essential detection for Directed Vertex Multicut} is NP-hard for any $\eps > 0$.
\end{corollary}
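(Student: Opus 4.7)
The plan is to reduce $(2-\eps)$-\textsc{Essential detection for DFVS} to $(2-\eps)$-\textsc{Essential detection for Directed Vertex Multicut} in polynomial time via the transformation suggested in the paragraph preceding the corollary. Concretely, given a directed graph~$G$ as an instance of DFVS, I would construct the \textsc{Directed Vertex Multicut} instance $(G, \mathcal{T})$ where $\mathcal{T} = \{(u_2, u_1) \mid (u_1, u_2) \in E(G)\}$. This transformation is clearly polynomial-time computable and leaves the vertex set unchanged, so the parameter~$k$ and the role of individual vertices as potentially essential carry over directly.

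The key technical step is to verify that the set of feasible solutions of the DFVS instance on~$G$ coincides exactly with the set of feasible solutions of the \textsc{Directed Vertex Multicut} instance~$(G, \mathcal{T})$. For the forward direction, suppose $S \subseteq V(G)$ is a DFVS of~$G$, so $G-S$ is acyclic, and let $(u_2,u_1) \in \mathcal{T}$ be the terminal pair arising from an arc $(u_1,u_2) \in E(G)$. If either endpoint lies in~$S$ then trivially no directed $(u_2,u_1)$-path exists in $G-S$; otherwise, such a path together with the arc $(u_1,u_2)$ (still present in $G-S$) would form a directed cycle, contradicting acyclicity. For the reverse direction, if $S$ is a multicut for $(G, \mathcal{T})$ but $G-S$ contains a directed cycle~$C$, then any arc $(u_1,u_2)$ on~$C$ gives rise to a directed $(u_2, u_1)$-path in $G - S$ via the remainder of~$C$, contradicting that $S$ hits the terminal pair $(u_2, u_1) \in \mathcal{T}$.

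Once this equivalence is established, the optimal solution value and the entire family of $c$-approximate solutions coincide between the two instances, so a vertex is $(2-\eps)$-essential for DFVS on~$G$ if and only if it is $(2-\eps)$-essential for \textsc{Directed Vertex Multicut} on~$(G, \mathcal{T})$. Hence any algorithm satisfying properties~(G\ref{g1}) and~(G\ref{g2}) for $(2-\eps)$-\textsc{Essential detection for Directed Vertex Multicut} on~$(G, \mathcal{T})$ with guess parameter~$k$ would simultaneously satisfy them for $(2-\eps)$-\textsc{Essential detection for DFVS} on~$G$ with the same~$k$. By~\cref{thm:dfvs-detection-hard}, the latter task is NP-hard under the UGC, which yields the claimed hardness for the directed multicut variant. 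There is no real obstacle here; the only thing to be careful about is handling the case where terminals themselves lie in~$S$, which the problem definition of \textsc{Directed Vertex Multicut} permits and which is consistent with the DFVS notion of removing a vertex to break cycles.
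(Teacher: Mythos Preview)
Your proposal is correct and follows exactly the approach the paper takes: the paper simply observes that the set of DFVS solutions on~$G$ equals the set of \textsc{Directed Vertex Multicut} solutions on~$(G,\mathcal{T})$ with~$\mathcal{T} = \{(u_2,u_1) \mid (u_1,u_2) \in E(G)\}$, and deduces the corollary from \cref{thm:dfvs-detection-hard}. Your write-up merely fills in the verification of this solution equivalence, which the paper leaves implicit.
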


By applying the proof technique above, but starting from a result about hardness of approximation for \textsc{$d$-Hitting Set}~\cite{vc-ugc}, we prove the lower bounds for \textsc{Vertex Cover} and \textsc{Undirected Vertex Multicut}. To state the hardness result, let $\mathrm{HS}(U, \mathcal{S})$ denote the size of a smallest solution to the \textsc{Hitting Set} instance $(U, \mathcal{S})$.
\begin{lemma}[\cite{vc-ugc}] \label{lem:vc-ugc}
    Assuming the UGC, the following problem is NP-hard for any sufficiently small $\delta>0$.
    Given a $d$-\textsc{Hitting Set} instance $(U, \mathcal{S})$ with $|U| = n$, distinguish between the following two cases:
    \begin{itemize}
        \item $\mathrm{HS}(U, \mathcal{S}) \leq (\frac1d + \delta)n$
        \item $\mathrm{HS}(U, \mathcal{S}) \geq (1 - \delta)n$
    \end{itemize}
\end{lemma}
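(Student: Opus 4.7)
The plan is to derive the claimed gap from the Unique Games Conjecture via a long-code style reduction, generalizing Khot--Regev's proof for $d=2$ (\textsc{Vertex Cover}) to arbitrary $d$. Recall that, under the UGC, Unique Label Cover is NP-hard in the following sense: for all $\eta, \zeta > 0$ there is an alphabet size $R$ for which distinguishing instances of value $\geq 1-\eta$ from those of value $\leq \zeta$ is NP-hard. I would take such an instance $\Phi$ with parameters $\eta, \zeta$ chosen much smaller than the target $\delta$ and reduce it to a $d$-\textsc{Hitting Set} instance $(U, \mathcal{S})$ whose optimum falls into one of the two cases of the lemma depending on the value of $\Phi$.

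For the construction, I would attach to each variable $v$ of $\Phi$ a ``long-code block'' $B_v$ of elements indexed by $[d]^R$, and take $U = \bigsqcup_v B_v$. Hyperedges of size $d$ are placed inside each block to simulate a $d$-ary dictatorship test, and folded across blocks via the edge bijections $\pi_e : [R] \to [R]$ of $\Phi$. The intra-block test is designed so that ``dictator slices'' $\{x \in B_v : x_{i} = 0\}$, which have relative density $1/d$, are feasible hitting sets, while the inter-block constraints force the coordinates $i$ chosen in each block to be consistent with a common labeling of $\Phi$ through the permutations~$\pi_e$.

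For completeness, a labeling $\sigma$ of $\Phi$ of value $\geq 1-\eta$ yields the hitting set $\bigcup_v \{x \in B_v : x_{\sigma(v)} = 0\}$ of size at most $(1/d + O(\eta))\,|U|$, giving the first case of the lemma for sufficiently small $\eta$. The main obstacle is soundness. Assuming a hitting set $S$ with $|S| \leq (1-\delta)\,|U|$, the block indicators $f_v := \mathbf{1}_{B_v \setminus S}$ have average measure $\geq \delta$ and ``pass'' every test hyperedge. The hard step is to invoke a dictatorship-test soundness theorem, typically a $d$-ary analogue of Bourgain's junta theorem or of the Mossel--O'Donnell--Oleszkiewicz invariance principle, to force each such $f_v$ to possess at least one coordinate $i_v \in [R]$ of influence $\geq \tau = \tau(\delta, d) > 0$. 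Standard influence-decoding through the edge permutations of $\Phi$ then produces a labeling of value $\geq \zeta'$ for some $\zeta'(\tau) > 0$; choosing $\zeta < \zeta'$ in the UGC hypothesis contradicts the soundness of $\Phi$. Once the $d$-ary test gadget is fixed so that dictators hit with density exactly $1/d$ and so that the Fourier soundness analysis goes through, the remainder of the argument is the standard UGC reduction template.
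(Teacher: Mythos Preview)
The paper does not prove this lemma at all: it is stated with the attribution \texttt{[\textbackslash cite\{vc-ugc\}]} and used as a black box in the proof of \cref{thm:vc-detection-hard}. There is therefore nothing to compare against; the paper simply imports the result from the literature.

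Your sketch is a faithful outline of the standard Khot--Regev template for proving this result, and the high-level structure (biased long-code blocks, $d$-ary dictatorship tests, completeness via dictator slices of density $1/d$, soundness via influence decoding and an invariance principle) is correct. That said, what you have written is a plan rather than a proof: the actual work lies in specifying the test distribution so that dictators are accepted with the right density \emph{and} the soundness analysis goes through, and in quoting a precise $d$-ary invariance/junta theorem with explicit parameter dependencies. For the purposes of this paper none of that is needed --- the right move is simply to cite the result, as the authors do.
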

As \textsc{Vertex Cover} is the special case of $d$-\textsc{Hitting Set} where $d = 2$, we can use the statement above to prove the result below. In its proof, we use $\textsc{VC}(G)$ to denote the minimum size of a vertex cover in $G$. 

\begin{theorem} \label{thm:vc-detection-hard}
    Assuming the UGC, $(1.5 - \eps)$-\textsc{Essential detection for VC} is NP-hard for any $\eps \in (0, 0.5]$.
\end{theorem}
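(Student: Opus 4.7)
The plan is to mirror the proof of \cref{thm:dfvs-detection-hard}, this time starting from \cref{lem:vc-ugc} specialized to $d = 2$. That lemma makes it NP-hard under the UGC to distinguish, on an $n$-vertex graph $G$, between $\textsc{VC}(G) \le (\tfrac12 + \delta)n$ and $\textsc{VC}(G) \ge (1-\delta)n$ for sufficiently small $\delta > 0$. I will show that a polynomial-time $(1.5-\eps)$-\textsc{Essential detection for VC} algorithm would resolve this gap, yielding the desired hardness.

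Given $\eps \in (0, 0.5]$ and an input $G$ on $n$ vertices, I first assume, exactly as in the proof of \cref{thm:dfvs-detection-hard}, that $2/\eps$ is an integer and that $n$ is a multiple of $4/\eps$; the latter is achieved by replacing $G$ with enough disjoint copies of itself, which scales both cases of \cref{lem:vc-ugc} uniformly. Construct $G'$ from $G$ by taking a copy $P$ of $G$ together with an independent set $Q$ of size $m := (\tfrac12 - \tfrac{\eps}{2})n$, and adding all edges between $P$ and $Q$. Fix $\delta > 0$ both sufficiently small for \cref{lem:vc-ugc} and satisfying $\delta < \eps/2$. Run the hypothesized detection algorithm on $G'$ with $k := n$, obtain the output set $S$, and show that $|S| < n$ in the low case while $|S| = n$ in the high case.

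In the low case, any vertex cover of $G$ together with $Q$ is a vertex cover of $G'$, of size at most $(\tfrac12 + \delta)n + m = (1 + \delta - \tfrac{\eps}{2})n < n$, so $\textsc{VC}(G') < n = k$ and property~(G\ref{g1}) gives $|S| < n$. In the high case, $P$ is a vertex cover of $G'$, so $\textsc{VC}(G') \le n$; conversely, any vertex cover $X'$ avoiding some $u \in P$ must contain all of $Q$ (to cover the $u$-to-$Q$ edges) and must restrict to a vertex cover of $G'[P] \cong G$ on $P$, so that
\begin{equation*}
|X'| \;\ge\; (1-\delta)n + m \;=\; \bigl(\tfrac{3}{2} - \delta - \tfrac{\eps}{2}\bigr)n \;>\; (1.5 - \eps)n \;\ge\; n,
\end{equation*}
using $\delta < \eps/2$ for the strict inequality and $\eps \le 1/2$ for the final one. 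Hence $P$ is the unique vertex cover of size at most $n$, giving $\textsc{VC}(G') = n = k$, and every $u \in P$ is $(1.5 - \eps)$-essential. Properties~(G\ref{g1}) and~(G\ref{g2}) together then force $S = P$ and $|S| = n$. Reading off $|S|$ thus distinguishes the two cases in polynomial time, contradicting \cref{lem:vc-ugc} under the UGC.

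The only points requiring care are the integrality of $m$ (handled by taking copies of $G$, exactly as for DFVS) and the single numerical condition $\delta < \eps/2$, which simultaneously controls the low-case cover bound and the strict essentiality inequality in the high case; no ideas beyond those of the DFVS proof appear to be needed. The noted extension to \textsc{Undirected Vertex Multicut} should follow by the standard edge-to-terminal-pair reduction that realizes \textsc{Vertex Cover} as a special case of \textsc{Vertex Multicut}.
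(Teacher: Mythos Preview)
Your proposal is correct and follows essentially the same approach as the paper: the same gadget (adding an independent set $Q$ of size $(\tfrac12 - \tfrac{\eps}{2})n$ fully joined to a copy $P$ of $G$), the same two claims, and the same use of properties~(G\ref{g1}) and~(G\ref{g2}). The only cosmetic differences are the choice of slack ($\delta < \eps/2$ versus the paper's $\delta \le \eps/4$) and the divisibility bookkeeping, neither of which affects the argument.
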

\begin{proof}
    Let $\eps \in (0, 0.5]$ be given. We can assume w.l.o.g. that $\frac{4}{\eps}$ is integral. If not, we could consider some $\eps' < \eps$ such that $\frac{4}{\eps'}$ is integer and prove hardness for $(2 - \eps')$-essential detection. As a $(1.5 - \eps)$-essential detection algorithm is also a valid algorithm for $(1.5 - \eps')$-essential detection, this would imply the hardness of $(1.5 - \eps)$-essential detection as well.

    We use \cref{lem:vc-ugc} as a starting point for hardness. To do so, let $G$ be an arbitrary graph on $n$ vertices. To use \cref{lem:vc-ugc}, we show how to reduce $G$ into $G'$, such that solving $(1.5 - \eps)$-\textsc{Essential detection for VC} on $G'$ allows us to distinguish between $\mathrm{VC}(G) \leq \left( \frac12 + \delta \right) n$ and $\mathrm{VC}(G) \geq (1 - \delta) n$ for arbitrarily small $\delta$. We assume w.l.o.g. that $n \cdot \eps/4$ is integer. If not, we could consider the graph obtained by having $4/\eps$ independent copies of $G$ instead, as the minimum size of a vertex cover relative to the total graph size would be the same as in $G$. We proceed by explaining the reduction, after which we prove its correctness.

    Our reduction starts with the directed graph $G$ and depends on the value of $\eps$. See also \cref{fig:vc-reduction} for a visual example. We start the construction of $G'$ as a copy of $G$. To avoid confusion between vertices in $G$ and $G'$, we denote the current vertex set of $G'$ as $P$. Next, we expand the graph with a set $Q$ of $m := \left( \frac12 - \frac{\eps}{2} \right)$ additional vertices. We complete the construction of $G'$ by adding the edge $\{p, q\}$ for all $p \in P$ and $q \in Q$.

    \begin{figure}[t]
        \centering
        \begin{subfigure}[t]{0.48\textwidth}
            \centering
            \includegraphics[width=\textwidth]{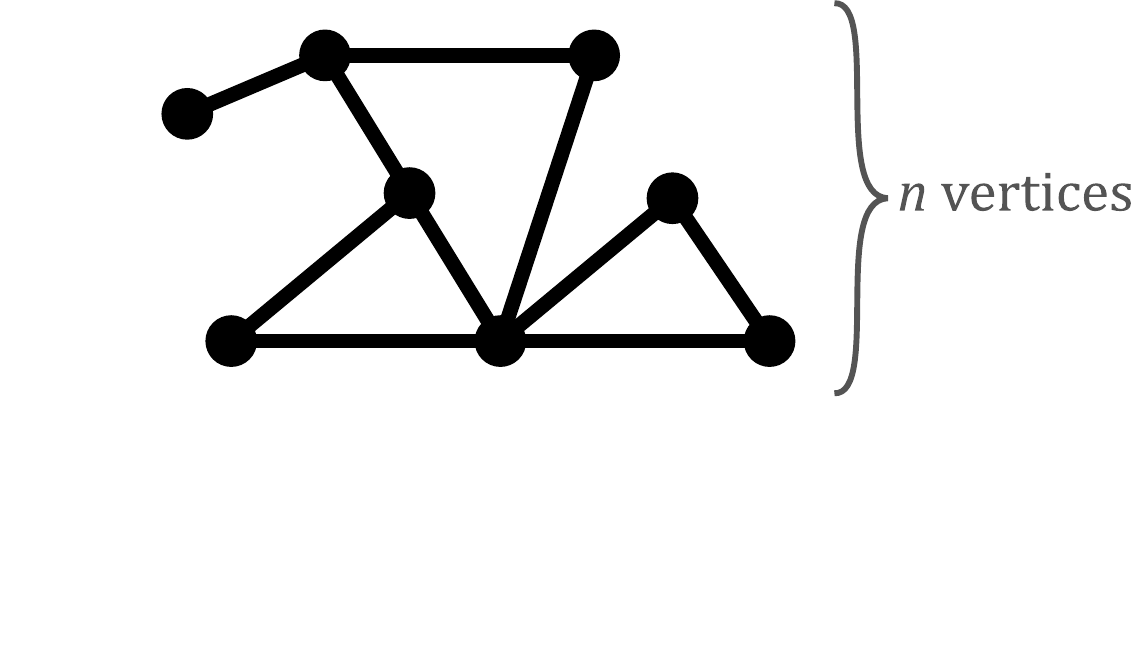}
            \caption{Example input graph $G$ with $n$ vertices.}
        \end{subfigure}
        \hfill
        \begin{subfigure}[t]{0.48\textwidth}
            \centering
            \includegraphics[width=\textwidth]{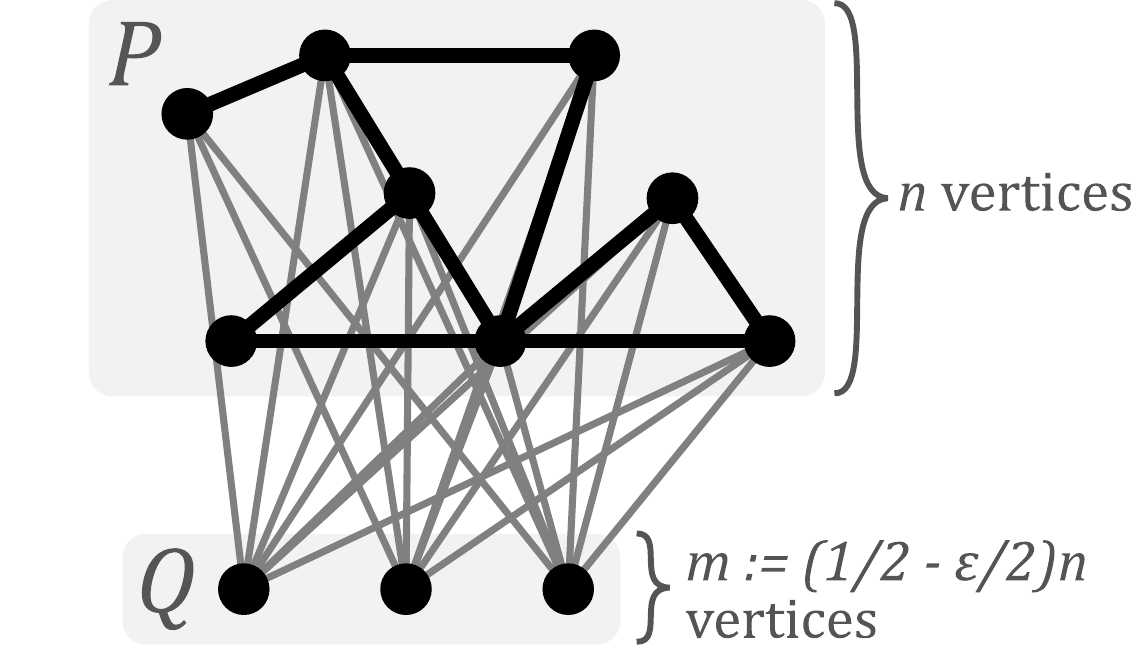}
            \caption{The resulting graph $G'$ when performing the reduction on $G$ with $\eps = \frac14$.}
        \end{subfigure}
        \caption{An example of the reduction for DFVS. }
        \label{fig:vc-reduction}
    \end{figure}

    To prove the correctness of this reduction, we show that the output of an algorithm for $(1.5 - \eps)$-\textsc{Essential detection for VC} on $G'$ can be used as subroutine to distinguish between $\mathrm{VC}(G) \leq \left( \frac12 + \delta \right) n$ and $\mathrm{VC}(G) \geq (1 - \delta) n$ for arbitrarily small~$\delta$. From now on we fix $\delta > 0$ to be arbitrarily small, so that $\delta \leq \frac{\eps}{4}$ in particular.

    Now suppose that an algorithm for $(1.5 - \eps)$-\textsc{Essential detection for VC} exists and let $S \subseteq V(G')$ be its output when run on $G'$ with $k$ set to $n$. We show that the following two claims hold.
    
    \begin{claim} \label{clm:vc-small}
        If $\mathrm{VC}(G) \leq \left( \frac12 + \delta \right) n$, then $|S| < n$.
    \end{claim}
    \begin{claim} \label{clm:vc-large}
        If $\mathrm{VC}(G) \geq (1 - \delta) n$, then $|S| = n$.
    \end{claim}
    
    Then, simply checking the size of the output set $S$ suffices to distinguish between $\mathrm{VC}(G) \leq \left( \frac12 - \frac{\delta}{2} \right) n$ and $\mathrm{VC}(G) \geq (1 - \delta) n$. From \cref{lem:vc-ugc}, we know that this distinction is NP-hard to make under the UGC, meaning that $(1.5 - \eps)$-\textsc{Essential detection for VC} is also NP-hard when assuming the UGC. To prove \cref{thm:vc-detection-hard}, it therefore suffices to prove \cref{clm:vc-small} and \cref{clm:vc-large}.

    \begin{claimproof}[Proof of \cref{clm:vc-small}]
        Suppose that $\mathrm{VC}(G) \leq \left( \frac12 - \delta \right) n$. We show that $G'$ has a vertex cover whose size is strictly smaller than $n$. Then, by property~(G\ref{g1}), we know that $S$ must be a subset of some smallest vertex cover in $G$. This implies that $|S| < n$ and hence proves the claim. Therefore, it suffices and remains to prove that $G'$ has a vertex cover whose size is strictly smaller than~$n$.

        Let $X$ be a smallest vertex cover in $G$. Since $G' - Q = G$, we see that $X$ is a vertex cover in $G' - Q$, which means that $X \cup Q$ is a vertex cover in $G'$. It remains to show that $|X \cup Q| < n$, which is achieved through the following derivation.
        \begin{align*}
            |X \cup Q| &= |X| + |Q| = \mathrm{VC}(G) + (1/2 - \eps / 2) n \\
            &\leq \left( 1/2 + \delta + 1/2 - \eps/2 \right) \\
            &\leq \left( 1/2 + \eps/4 + 1/2 - \eps/2 \right) n < n & \text{since $\delta \leq \eps/4$ \claimqedhere} \\
        \end{align*}
    \end{claimproof}

    \begin{claimproof}[Proof of \cref{clm:vc-large}]
        Suppose that $\mathrm{VC}(G) \geq (1 - \delta) n$. We show that the following two properties hold:
        \begin{enumerate}[(a)]
            \item $P$ is a vertex cover in $G'$. \label{itm:large-vc-a}
            \item All vertices in $P$ are $(1.5 - \eps)$-essential in $G'$. \label{itm:large-vc-b}
        \end{enumerate}
        Before showing that these two properties hold, we argue why doing so would suffice to prove the claim. First note that $P$ is the unique smallest vertex cover in $G'$ if both properties hold: $P$ is a vertex cover by property~(\ref{itm:large-vc-a}) and any solution that does not contain every vertex of $P$ must be at least $(1.5 - \eps)$ times as large by property~(\ref{itm:large-vc-b}). Hence, $\mathrm{VC}(G') = |P| = n$. Then by property~(G\ref{g2}), $S$ (the output of running a $(1.5 - \eps)$-\textsc{Essential detection for VC} algorithm on $G'$ with $k$ set to $n$) must contain all of $P$, so $|S| \geq n$. By property~(G\ref{g1}), no other vertices in can be in $S$, so $|S| = n$.

        We proceed by showing properties~(\ref{itm:large-vc-a}) and~(\ref{itm:large-vc-b}) hold. It is easy to see that $P$ is a vertex cover in $G'$ since $G' - P = G'[Q]$ is an edgeless graph. This shows that property~(\ref{itm:large-vc-a}) holds.

        To show that property~(\ref{itm:large-vc-b}) holds, consider an arbitrary $p \in P$ and let $X'$ be a vertex cover in $G'$ that does not contain $p$. We show that $|X'| > (1.5 - \eps) n$, meaning that $p$ is $(1.5 - \eps)$-essential. As $p \in P$ was chosen arbitrarily, all vertices in $P$ are $(1.5 - \eps)$-essential.

        To prove that $|X'| > (1.5 - \eps)n$, we provide lower bounds for both $|X' \cap P|$ and $|X' \cap Q|$. First we bound $|X' \cap P|$. Since $G'[P] = G$, we must have $|X' \cap P| \geq \mathrm{VC}(G)$, which is at least $(1 - \delta) n$ by assumption.

        Secondly, we prove that $|X' \cap Q| \geq m$. We proceed by contradiction, so suppose otherwise. Then, there must be at least one $q \in Q$ that is not in $X'$. Since $p \notin X'$ by definition of $X'$, $\{p, q\}$ is an edge in $G' - X'$. This contradicts the assumption that $X'$ is a vertex cover in $G'$. Therefore, it follows that $|X' \cap Q| \geq m = \left( \frac12 - \frac{\eps}{2} \right) n$. Now, we can make the following derivation.
        \begin{align*}
            |X'| &= |X' \cap P| + |X' \cap Q| \geq ( 1 - \delta + 1/2 - \eps / 2) n \\
            &\geq (1 - \eps/4 + 1/2 - \eps/2) n > n & \text{since $\delta \leq \eps/4$}
        \end{align*}
        This shows that property~(\ref{itm:large-vc-a}) holds, which, together with property~(\ref{itm:large-vc-a}), proves the claim.
    \end{claimproof}
    This concludes the proof of \cref{thm:vc-detection-hard}.
\end{proof}

This result is quite easily generalized to yield an analogous bound for \textsc{Vertex Multicut}, since the set of solutions for \textsc{Vertex Cover} on a graph~$G$ equals the set of solutions to the \textsc{Vertex Multicut} instance obtained from~$G$ by introducing a terminal pair $(u_1, u_2)$ for every edge $\{u_1, u_2\}$ of~$G$.
\begin{corollary} \label{cor:multicut-lower-bound}
    Assuming the UGC, $(1.5 - \eps)$-\textsc{Essential detection for Vertex Multicut} is NP-hard for any $\eps \in (0, 0.5]$.
\end{corollary}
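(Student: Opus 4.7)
The plan is to reduce $(1.5-\eps)$-\textsc{Essential detection for VC} to $(1.5-\eps)$-\textsc{Essential detection for Vertex Multicut} in polynomial time and then invoke \cref{thm:vc-detection-hard}. Given a \textsc{Vertex Cover} instance $G$ together with the integer $k$ supplied to the detection task, I would construct the \textsc{Vertex Multicut} instance $(G, \mathcal{T})$ on the same graph with $\mathcal{T} = \{(u_1, u_2) : \{u_1, u_2\} \in E(G)\}$, introducing one terminal pair per edge. The reduction is clearly polynomial and preserves the parameter $k$.

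The key structural observation is that a set $S \subseteq V(G)$ is a vertex cover of $G$ if and only if it is a vertex multicut of $(G, \mathcal{T})$. For the forward direction, if $S$ is a vertex cover then for every edge $\{u_1, u_2\}$ at least one endpoint lies in $S$, and since every $(u_1, u_2)$-path passes through both endpoints, $S$ separates every terminal pair (recall from \cref{sec:preliminaries} that separators are allowed to contain the terminals themselves). Conversely, if $S$ is a vertex multicut then the length-$1$ path consisting of the single edge $\{u_1, u_2\}$ must be hit, which forces $u_1 \in S$ or $u_2 \in S$, so $S$ is a vertex cover.

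As a consequence of this equivalence, the two problems induce identical sets of feasible solutions on the paired instances $(G, \mathcal{T})$; in particular $\textsc{opt}$ coincides and a vertex is $(1.5-\eps)$-essential for \textsc{Vertex Cover} on $G$ precisely when it is $(1.5-\eps)$-essential for \textsc{Vertex Multicut} on $(G, \mathcal{T})$. Therefore a polynomial-time algorithm for $(1.5-\eps)$-\textsc{Essential detection for Vertex Multicut}, when run on $(G, \mathcal{T}, k)$, would output a set $S$ satisfying both guarantees G\ref{g1} and G\ref{g2} for the VC detection task on $(G, k)$, contradicting \cref{thm:vc-detection-hard} under the UGC. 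There is essentially no obstacle here; the only point requiring mild care is the convention that $(S_1, S_2)$-separators may include the terminal endpoints, which is precisely what makes the vertex-cover/multicut correspondence exact rather than off-by-one.
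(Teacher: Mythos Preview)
Your proposal is correct and follows exactly the approach the paper uses: introduce one terminal pair per edge and observe that the set of solutions to \textsc{Vertex Cover} on~$G$ coincides with the set of solutions to the resulting \textsc{Vertex Multicut} instance, so the hardness from \cref{thm:vc-detection-hard} transfers directly. The paper states this in a single sentence before the corollary; your write-up simply spells out the straightforward details.
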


\section{Conclusion and discussion} \label{sec:conclusion}
We revisited the framework of search-space reduction via the detection of essential vertices. The improved running-time guarantees for fixed-parameter tractable algorithms that result from our detection algorithms give insight into which types of inputs of NP-hard vertex hitting set problems can be solved efficiently and optimally: not only the inputs whose total solution size is small, but also those in which all but a small number of vertices of an optimal solution are essential. Our detection algorithms arise by analyzing the integrality gap for the \textsc{$v$-Avoiding} version of the corresponding LP-relaxation, which only has to be analyzed for inputs in which~$\{v\}$ is a singleton solution. Our results show that the integrality gaps in this setting are much smaller than for the standard linear program of the hitting set formulation.

For \textsc{Directed Feedback Vertex Set}, our lower bound shows that the polynomial-time algorithm that detects $2$-essential vertices is best-possible under the UGC. For \textsc{Vertex Cover} and \textsc{Vertex Multicut}, our lower bounds do not match the existing upper bounds. It would be interesting to close these gaps.

Our positive results rely on standard linear programming formulations of the associated hitting set problems. In several scenarios, algorithms based on the standard linear program can be improved by considering stronger relaxations such as those derived from the Sherali-Adams hierarchy or Lasserre-hierachy (cf.~\cite{Laurent03}). For example, Aprile, Drescher, Fiorini, and Huynh~\cite{AprileDFH23} proved that for the \textsc{Cluster Vertex Deletion} problem (which asks to hit all the induced~$P_3$ subgraphs) the integrality gap of the standard LP-formulation is~$3$, but decreases to~$2.5$ using the first round of the Sherali-Adams hierarchy. Applying~$(1/\eps)^{\Oh(1)}$ rounds further decreases the gap to~$2+\eps$. Can such hierarchies lead to better algorithms for \textsc{$c$-Essential detection}?

So far, the notion of $c$-essentiality has been explored for vertex hitting set problems on graphs. For other optimization problems whose solutions are subsets of objects (for example, edge subsets, or subsets of tasks in a scheduling problem) one can define $c$-essential objects as those contained in all $c$-approximate solutions. Does this notion have interesting applications for problems that are not about graphs?

\bibliography{references}

\clearpage 

\appendix

\section{From Essential Detection to Search Space Reduction} \label{sec:essential:to:search:space}
Below we provide the statement by Bumpus et al.~\cite{bumpus-essential} that leverages algorithms for essential detection to provide improved guarantees for the running times of FPT algorithms. The original formulation only considers \textsc{$\mathcal{C}$-Deletion} problems, but the same proof is easily seen to apply to any \textsc{Vertex Hitting Set} problem on graphs.

\begin{theorem}[{\cite[Theorem 5.1]{bumpus-essential}}] \label{thm:non-essentiality}
Let~$\Pi$ be a \textsc{Vertex Hitting Set} problem. 
\begin{itemize}
    \item Let $\mathcal{A}$ be an algorithm that, given a (potentially annotated) graph $G$ and integer $k$, runs in time~$f(k) \cdot |V(G)|^{\Oh(1)}$ for some non-decreasing function~$f$ and returns a minimum-size solution, if there is one of size at most~$k$. 
    \item Let $\mathcal{B}$ be a polynomial-time algorithm for \psdetectFor{c}{$\Pi$}.
\end{itemize}
Then there is an algorithm that, given a (potentially annotated) graph $G$, outputs a minimum-size solution to~$\Pi$ in time $f(\ell) \cdot |V(G)|^{\Oh(1)}$, where~$\ell$ is the number of vertices in an optimal solution that are \emph{not} $c$-essential.
\end{theorem}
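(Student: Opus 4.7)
The plan is to combine the algorithms $\mathcal{A}$ and $\mathcal{B}$ via a dove-tailing scheme that simultaneously guesses the number $\ell'$ of non-$c$-essential vertices in an optimal solution and the optimum value $k$. The key structural fact I rely on is that, for every \textsc{Vertex Hitting Set} problem $\Pi$, any graph $G$, and any $S \subseteq V(G)$, a subset $Y \subseteq V(G) \setminus S$ is a solution to $\Pi$ on $G - S$ if and only if $S \cup Y$ is a solution to $\Pi$ on $G$: every implicit hitting constraint on $G$ either contains a vertex of $S$ (and is hit by $S$) or is entirely contained in $V(G) \setminus S$ and therefore forms a constraint on $G - S$ as well. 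This lets me freely compose a partial solution produced by detection with a residual solution produced by the FPT solver.

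Concretely, the algorithm iterates $\ell' = 0, 1, 2, \ldots$ in an outer loop and $k \in \{0, 1, \ldots, |V(G)|\}$ in an inner loop. For each pair $(\ell', k)$ it computes $S := \mathcal{B}(G, k)$ followed by $Y := \mathcal{A}(G - S, \ell')$; whenever $\mathcal{A}$ returns a valid $Y$ (necessarily of size at most $\ell'$), the set $S \cup Y$ is recorded as a candidate solution for $\Pi$ on $G$, and the algorithm tracks the smallest candidate encountered.

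Correctness is established by inspecting the distinguished iteration $(\ell', k) = (\ell, k^*)$, where $k^* := \textsc{opt}_\Pi(G)$. Property~\ref{g1} guarantees that $S$ is contained in some optimal solution $X^*$, and property~\ref{g2} guarantees that $S$ contains every $c$-essential vertex, so $|S| \geq k^* - \ell$. Consequently $X^* \setminus S$ is a valid solution to $G - S$ of size at most $\ell$, so $\mathcal{A}(G - S, \ell)$ returns some $Y$ and $S \cup Y = X^*$ is a candidate of size $k^*$. Since every candidate is a valid solution on $G$, no candidate can be smaller than $k^*$, so the recorded minimum is exactly an optimal solution.

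For the running time, each call to $\mathcal{A}$ uses parameter $\ell'$ and costs $f(\ell') \cdot |V(G)|^{\Oh(1)}$, dominating the paired polynomial-time call to $\mathcal{B}$. Processing a single value of $\ell'$ therefore costs at most $|V(G)| \cdot f(\ell') \cdot |V(G)|^{\Oh(1)} = f(\ell') \cdot |V(G)|^{\Oh(1)}$, and summing over $\ell' = 0, 1, \ldots, \ell$ yields $\sum_{\ell'=0}^{\ell} f(\ell') \cdot |V(G)|^{\Oh(1)} \leq (\ell + 1) f(\ell) \cdot |V(G)|^{\Oh(1)} = f(\ell) \cdot |V(G)|^{\Oh(1)}$ by monotonicity of $f$. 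The main obstacle is equipping the scheme with a termination rule that stops at $\ell' = \ell$ without knowing $\ell$ a priori: this is handled by observing that the recorded minimum cannot decrease below $k^*$ and stabilises once $\ell' \geq \ell$, so a combination of monitoring the stored minimum across successive iterations and using polynomial-time certificates obtained by re-invoking $\mathcal{B}$ on the residual instance yields a valid stopping condition within the claimed time bound.
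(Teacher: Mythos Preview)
The paper does not actually prove this theorem; it only restates it and defers to the proof in Bumpus et~al., noting that the argument goes through verbatim for any \textsc{Vertex Hitting Set} problem. Your high-level scheme---dovetail over a budget~$\ell'$ and a guess~$k$, run~$\mathcal{B}(G,k)$ to obtain~$S$, then solve the residual instance~$G-S$ with~$\mathcal{A}$---is indeed the same architecture used there, and your correctness and running-time bookkeeping for the distinguished iteration $(\ell, \textsc{opt}_\Pi(G))$ is fine.

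The genuine gap is the termination rule, which you correctly flag as the main obstacle but do not actually resolve. ``Monitoring the stored minimum until it stabilises'' does not work: the minimum can sit at a fixed suboptimal value for arbitrarily many values of~$\ell'$ before dropping, so stabilisation is not a certificate of optimality. Nor does ``re-invoking~$\mathcal{B}$ on the residual instance'' yield a lower bound on~$\textsc{opt}_\Pi(G)$, since~$\mathcal{B}$ only outputs a subset of an optimal solution, never a witness that no smaller solution exists. The clean fix---and the one actually used in the cited proof---is to call~$\mathcal{A}$ with parameter~$k - |S_k|$ rather than~$\ell'$ (skipping the iteration whenever $k - |S_k| > \ell'$) and to halt the moment~$\mathcal{A}$ succeeds. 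This is sound because for $k < \textsc{opt}_\Pi(G)$ one has $\textsc{opt}_\Pi(G - S_k) \geq \textsc{opt}_\Pi(G) - |S_k| > k - |S_k|$, so~$\mathcal{A}$ must fail; whereas for $k \geq \textsc{opt}_\Pi(G)$, property~(G\ref{g1}) guarantees $S_k$ lies inside some optimum and hence any returned~$Y_k$ satisfies $|S_k| + |Y_k| = \textsc{opt}_\Pi(G)$. Thus the very first success already certifies optimality, and no separate stopping test is needed.
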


\end{document}